\begin{document}
\newtheorem{theorem}{Theorem}
\newtheorem{corollary}{Corollary}
\newtheorem{remark}{Remark}
\title{Waveform Design for MIMO-OFDM Integrated Sensing and Communication System: \\
An Information Theoretical Approach}
\author{Zhiqing Wei,~\IEEEmembership{Member,~IEEE,}
	Jinghui Piao,~\IEEEmembership{Student Member,~IEEE},\\
	Xin Yuan,~\IEEEmembership{Member,~IEEE,}
	Huici Wu,~\IEEEmembership{Member,~IEEE},\\
	J. Andrew Zhang,~\IEEEmembership{Senior Member,~IEEE,}
	Zhiyong Feng,~\IEEEmembership{Member,~IEEE},\\
	Lin Wang,~\IEEEmembership{Student Member,~IEEE},
	and Ping Zhang~\IEEEmembership{Fellow,~IEEE}
	
	\thanks{This work was supported in part by the National Key Research and Development Program of China under Grant 2020YFA0711302,
		in part by the National Natural Science Foundation of China (NSFC) under Grant 62271081, Grant 92267202 and Grant U21B2014.
		
		Zhiqing Wei, Jinghui Piao, Zhiyong Feng, Lin Wang, and
		Ping Zhang are with School of Information and Communication
		Engineering, Beijing University of Posts and Telecommunications, Beijing
		100876, China (email: weizhiqing@bupt.edu.cn; piaojinghui@bupt.edu.cn; fengzy@bupt.edu.cn;
		wlwl@bupt.edu.cn; pzhang@bupt.edu.cn).
		
		Huici Wu is with the National Engineering Lab for Mobile Network Technologies,
		Beijing University of Posts and Telecommunications, Beijing 100876,
		China (e-mail: dailywu@bupt.edu.cn).
		
		X. Yuan is with the University of Technology Sydney, Ultimo, NSW 2007,
		Australia (email: xin.yuan@ieee.org).
		
		J. A. Zhang is with the Global Big Data Technologies Centre, University
		of Technology Sydney, Sydney, NSW, Australia (email: Andrew.
		Zhang@uts.edu.au).
		
		Correspondence authors: Jinghui Piao, Zhiqing Wei.
	}}
\maketitle

\begin{abstract}
Integrated sensing and communication (ISAC) is regarded as the enabling technology 
in the future 5th-Generation-Advanced (5G-A) and 6th-Generation (6G) 
mobile communication system. 
ISAC waveform design is critical in ISAC system.
However, the difference of the performance metrics between 
sensing and communication brings challenges for the ISAC waveform design.
This paper applies the unified performance metrics in information theory, 
namely mutual information (MI), to measure the communication and sensing performance in multicarrier ISAC system. 
In multi-input multi-output orthogonal frequency division multiplexing (MIMO-OFDM) ISAC system, we first derive the sensing and communication MI with subcarrier correlation and spatial correlation. 
Then, we propose optimal waveform designs for maximizing the sensing MI, communication MI and the weighted sum of sensing and communication MI, respectively.
The optimization results are validated by Monte Carlo simulations.
Our work provides effective closed-form expressions for waveform design, enabling the realization of MIMO-OFDM ISAC system with balanced performance in communication and sensing.
\end{abstract}

\begin{IEEEkeywords}
Integrated Sensing and Communication; Mutual Information; Information Theory; Waveform Design; MIMO-OFDM
\end{IEEEkeywords}

\section{Introduction}

Driven by emerging intelligent applications such as smart city, 
autonomous driving, etc., 
which require both high-quality wireless connections and high-accurate sensing capability~\cite{wang2021},
the sensing function is integrating into the future 5th-Generation-Advanced (5G-A) and 6th-Generation (6G) mobile communication networks. 
Integrated sensing and communication (ISAC) \cite{liu2022}, which realizes sensing and communication functions with the same hardware and wireless resources \cite{feng2020}, has great potential in supporting the intelligent applications in 5G-A and 6G \cite{imt2030}.

The ISAC waveform design is fundamental for the ISAC system \cite{zhang2021}, 
which has a crucial influence on the sensing and communication performance. 
However, the ISAC waveform design faces the challenge of the difference in performance metrics between sensing and communication \cite{pin2021}. 
The performance metrics of communication are mainly spectrum efficiency, 
bit error rate, delay, etc. 
While the performance metrics of sensing are mainly 
detection performance metric, including false alarm probability and detection probability, and estimation performance metric, including Minimum Mean Square Error~(MMSE) and Cramer-Rao Lower Bound (CRLB), etc.~\cite{kay1993}.
The difference of the performance metrics between sensing and communication makes it difficult to be applied to form a unified performance indicator 
to measure the overall performance of the ISAC system \cite{pin2021}, 
which brings challenges for the ISAC waveform design.

The unified performance metrics of the ISAC system have been studied in recent years, including estimation information rate \cite{b1}, equivalent Mean Square Error (MSE) \cite{b2}, capacity distortion \cite{b3}\cite{b4}, mutual information (MI) \cite{b5,b6,b7,b8}, etc., which are introduced as follows.

\begin{itemize}
\item Estimation information rate is an approximate MI between 
observation and parameter. 
Chiriyath \textit{et al.} \cite{b1} regarded the jointly received signals as multiple access channels and adopted estimation information rates to predict time delays.
With the communication information rate characterizing the performance of communication, the tradeoff between communication and sensing was studied. 

\item Equivalent MSE is an MSE metric that reveals the tradeoff between estimation MSE and communication MSE by considering the MMSE of estimating input from the output in a Gaussian channel.
In \cite{b2}, an effective communication MMSE based on rate-distortion theory was proposed, which is combined with CRLB to provide the performance metrics for the ISAC system.

\item Capacity distortion is the maximum rate at which transmission rate and state estimation distortion are achievable \cite{b3}, which measures the tradeoff between sensing and communication. 
The capacity distortion tradeoff for the memoryless channel has been studied in \cite{b4} to maximize the probability distribution of the channel input symbols.

\item MI has been applied to evaluate the performance of communication and sensing in the ISAC system recently \cite{b5}. Sensing MI is the conditional MI between the sensing channel and echo signal \cite{b6}. While communication MI is defined as the conditional MI between the transmitted and received signals. Bell \cite{b6} first derived the conditional MI between the target response and the echo signal. Yang \emph{et al.} \cite{b7} proved that the minimization of MMSE is equivalent to the maximization of MI. Hence, sensing MI is widely adopted as the performance metric of sensing \cite{b7}\cite{b8}. Both sensing MI and communication MI characterize the limit of information acquisition. Compared to other performance metrics, MI provides the similar physical significance, and mathematical expressions for sensing and communication \cite{b5}, as well as the same unit of measurement and similar optimization techniques \cite{b5}. Additionally, MI is not limited to perfect scenario assumptions (e.g., CSI imperfectness \cite{b12} and environmental clutter \cite{liutransmission}), specific algorithms, etc.
Therefore, MI is widely used as a unified performance metric for the ISAC system, allowing tractability and robustness for the performance evaluation.
\end{itemize}

As a unified ISAC performance metric, MI is adopted in ISAC waveform design, including ISAC waveform design in the time-frequency domain and space domain \cite{b9,b10,b11,b12}. For ISAC waveform design in the time-frequency domain, Zhu \emph{et al.} \cite{b9} proposed an OFDM radar-communication waveform optimization method. The objective function is the interception probability, and the constraints are sensing MI and data information rate. Ahmed \emph{et al.} \cite{b10} maximized the overall and worst-case communication MI under the constraint of transmit power or sensing MI to realize power and subcarriers allocation of OFDM signals. In \cite{b11}, the sensing MI and Data Information Rate (DIR) are jointly maximized for OFDM ISAC waveform under the constraint of the total transmit power. For the ISAC waveform design in the space domain, most studies focused on MIMO technologies. Yuan \emph{et al.} \cite{b12} proposed a joint space-time optimization scheme to maximize the weighted sum of sensing and communication MIs with spatial correlation MIMO considering channel estimation errors. However, existing studies did not provide the analysis of sensing and communication MI in the space-time-frequency domains, which is the theme of this paper.

This paper derives the closed-form sensing and communication MI expressions for the MIMO-OFDM ISAC downlink system in the perceptive mobile networks \cite{zhang2021}, 
where the base station (BS) utilizes the echo of the downlink communication signal for radar sensing. For communication, the instantaneous communication MI is derived with the assumption that both the transmitter and communication receiver have perfect knowledge of the communication channels. For sensing, the instantaneous sensing MI is derived with the assumption that both the transmitter and sensing receiver possess perfect knowledge of the transmitted signals. The ISAC waveform design is provided based on the derived sensing and communication MIs. It should be noted that although the waveform is optimized in this paper, the actual transmitted information is not changed, since the designed waveforms with the desired correlation can be realized by the techniques, such as precoding and signal modulations. The main contributions of this paper are summarized as follows.
\begin{itemize}
	\item Based on the MIMO-OFDM multi-target sensing model, the correlation between the subcarriers in the sensing channel is modeled and analyzed, and the closed-form sensing MI of the MIMO-OFDM system with subcarrier correlation is derived.
	
	\item Based on the frequency-selective fading MIMO-OFDM communication model, the analytical and closed-form communication MIs for the MIMO-OFDM system under perfect channel state information (CSI) are derived. 
 
    \item Based on the derived sensing and communication MIs, and the correlation matrices between the subcarriers in the sensing channel, the expression for the optimized waveform is derived based on the flexible weighted sum of the sensing and communication MIs. The tradeoff between sensing and communication MIs is revealed.
    
\end{itemize}

The rest of this paper is organized as follows. In Sec. \uppercase\expandafter{\romannumeral2}, the communication and sensing models are described. In Sec. \uppercase\expandafter{\romannumeral3}, we derive the closed-form expressions of communication MI and sensing MI. Sec. \uppercase\expandafter{\romannumeral4} discusses three power allocation schemes using the derived MI. The simulation results are provided in Sec. \uppercase\expandafter{\romannumeral5}. The key notations used in this paper are listed in Table \ref{sys_para}.

\begin{table}[!t]
	\renewcommand\arraystretch{1.2}
	\caption{\label{sys_para}Key Notations and Definitions}
	\begin{center}
		\begin{tabular}{l l}
			\hline
			\hline
			
			{Notations} & {Definitions} \\
			
			\hline
			
			${N}_{c}$ & Number of subcarriers  \\
			${N}_{t}$ & Number of transmit antenna elements \\
			${{N}_{r}}$ & Number of receive antenna elements  \\
			${{N}_{x}}$ & Number of OFDM symbols\\
			${L}_{c}$ & Number of paths\\
			${L}_{r}$ & Number of targets\\
			$H_{\mu\nu}(p)$ &Communication channel coefficient for $p$-th subcarrier \\&from $\mu$-th transmit antenna to $\nu$-th receive antenna\\
			$G_{\mu\nu}(p)$ &Sensing channel coefficient for $p$-th subcarrier from $\mu$-th \\&transmit antenna to $\nu$-th receive antenna\\
			$E$ & Average transmit power\\
			${\omega }_{r}$ & Sensing weighting coefficient\\
			${\omega }_{c}$ & Communication weighting coefficient\\
			${F}_{r}$ & Maximum sensing MI\\
			${F}_{c}$ & Maximum communication MI\\
			${F}_{\omega}$ & Maximum weighted MI\\
			$\sigma_n^2$ & Noise variance \\
			
			\hline
			\hline
		\end{tabular}
	\end{center}
\end{table}

\textit{Notation:} $\mathbf X$ denotes matrix and ${\mathbf x}_{i}$ denotes $i$-th column of $\mathbf X$. $(\cdot)^{T}$, $(\cdot)^{H}$ and $(\cdot)^{-1}$ denotes transposition, conjugate transportation and inverse, respectively. Besides, $\mathbf I$ is the identity matrix. $E$[$\cdot$] denotes the expectation operator. det$(\cdot)$ and tr$(\cdot)$ are the determinant and trace of matrix, respectively. diag$ {\left\lbrace  x_1,x_2,\cdots,x_n\right\rbrace }$ is diagonal matrix with diagonal elements $x_1,x_2,\cdots,x_n$ and diag$ {\left\lbrace  {\mathbf X}_1,{\mathbf X}_2,\cdots,{\mathbf X}_n\right\rbrace }$ is block diagonal matrix with diagonal matrices ${\mathbf X}_1,{\mathbf X}_2,\cdots,{\mathbf X}_n$.

\section{System Model}

A MIMO-OFDM ISAC system comprising multiple radar targets is considered.
Node A (i.e., transmitter) communicates with node B (i.e., receiver) and senses the environment simultaneously via  ${{N}_{c}}$ subcarriers, 
as shown in Fig. 1. 
Specifically, node A transmits data to node B for communication, 
while receiving the echo signals for sensing the environment. 
Each of node A and B has two spatially separated antenna arrays, 
i.e., ${{N}_{t}}$ transmit antenna elements and ${{N}_{r}}$ receive antenna elements.  
We assume that both nodes A and B have perfect communication CSI, while only the correlation matrix of the sensing channel is known at the transmitter \cite{ouyang2022}, which is practical to obtain through field measurements, or by some feedback mechanism \cite{b7}.
If the ISAC system has imperfect CSI, the communication channel will be affected by the channel estimation error ${\sigma_e}^2$ and the derivation of communication and sensing MIs will be affected by the approach of allocating training and data payload signals, channel estimation error, etc., which will be investigated in the future work.

The data on the $\mu $-th transmit antenna element 
during the $n$-th symbol interval time on the $p$-th subcarrier 
is $x_{n}^{\mu }(p),p=0,1,\cdots ,{{N}_{c}}-1$. 
$\{x_{n}^{\mu }(p),p=0,1,\cdots ,{{N}_{c}}-1\}$ are transmitted in parallel on ${{N}_{c}}$ subcarriers through the fading channel with undergoing Inverse Fast Fourier Transform (IFFT) and adding a cyclic prefix.
\begin{figure}[t]
	\centering
	\includegraphics[width=0.5\textwidth]{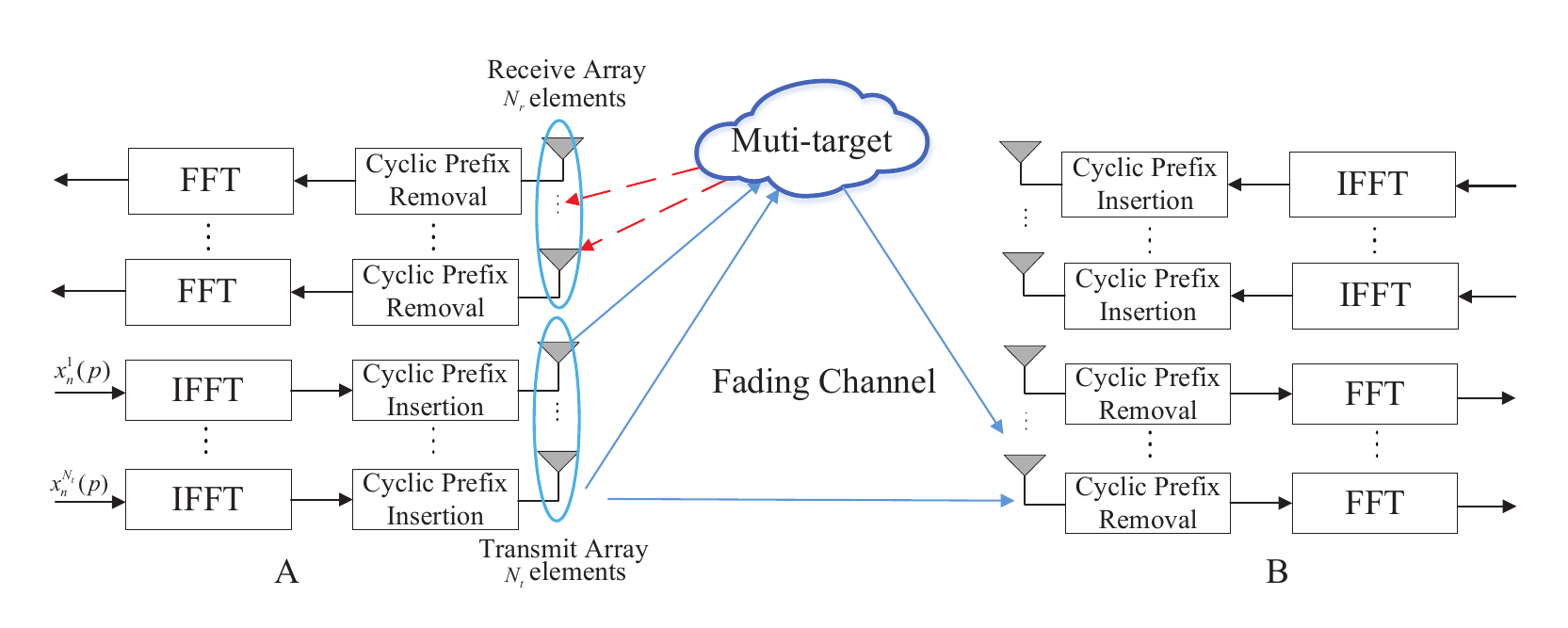}
	\caption{MIMO-OFDM ISAC system.}
	\label{fig}
\end{figure}

\subsection{Communication Models}

It is assumed that the MIMO-OFDM channel is frequency-selective and time-flat with a channel order ${{L}_{c}}$, where ${{L}_{c}}$ is the number of paths. 
The impulse response of the wireless channel is \cite{ts2022}
\begin{equation}
	h(t,\tau )=\sum\limits_{l=0}^{{{L}_{c}}}{{{h}_{l}}(t)\delta (\tau -{{\tau }_{l}})}\label{eq1},
\end{equation}
where ${{h}_{l}}(t)$ is the impulse response for the $l$-th path, which is independent of other  $(L_c - 1)$ paths. The channel gains of the $l$-th MIMO path from ${{N}_{t}}$ transmit antenna elements to ${{N}_{r}}$ receive antenna elements are expressed as the following discrete baseband equivalent impulse response matrix.
\begin{equation}
	{{\mathbf{H}}_{l}}=\left( \begin{matrix}
		{{h}_{11}}(l) & \ldots  & {{h}_{1{{N}_{r}}}}(l)  \\
		\vdots  & \ddots  & \vdots   \\
		{{h}_{{{N}_{t}}1}}(l) & \cdots  & {{h}_{{{N}_{t}}{{N}_{r}}}}(l)  \\
	\end{matrix} \right)\in {{\mathbb{C}}^{{{N}_{t}}\times {{N}_{r}}}}\label{eq2},
\end{equation}
where ${{h}_{\mu \nu }}(l)$ is the channel time domain coefficient between the $\mu $-th transmit antenna element and the $\nu $-th receive antenna element on the $l$-th path, including the influence of the transmit and receive filter and the relative delay between the antennas. 

It is supposed that the receiver antenna elements are spatially uncorrelated due to the presence of multiple local scatters around the receiver or widely separated placement \cite{Jafar}.
In contrast, the spatial correlation of the transmit antenna elements still 
needs to be considered. 
As the channel matrix of different paths is independent, 
the $l$-th path of the random MIMO channel response can be decomposed into \cite{b15}
\begin{equation}
	{{\mathbf{H}}_{l}}=\mathbf{R}_{l}^{\frac{1}{2}}{{\mathbf{H}}_{\omega ,l}}\label{eq3},
\end{equation}
where ${{\mathbf{R}}_{l}}=E\{{{\mathbf{h}}_{\nu }}(l)\mathbf{h}_{\nu }^{H}(l)\}$ ($l=0,\cdots ,{{L}_{c}};\text{ }\nu =1,\cdots ,{{N}_{r}}$) represents the correlation coefficient matrix of the antenna on the $l$-th path and ${{\mathbf{h}}_{\nu }}(l)$ stands for the column vector of ${{\mathbf{H}}_{l}}$. The elements in ${{\mathbf{H}}_{\omega ,l}}\in {{\mathbb{C}}^{{{N}_{t}}\times {{N}_{r}}}}$ follow independent and identically distributed (i.i.d.) zero-mean circularly symmetric complex Gaussian (CSCG) distribution with variance one, i.e., $\mathcal{C}\mathcal{N}(0,1)$ \cite{b15}.

The communication receivers are assumed to have perfect carrier synchronization, timing, and sampling of the symbol rate. Once the cyclic prefix has been removed and the Fast Fourier Transform (FFT) has been completed, the received data on the $\nu $-th receive antenna element are given by
\begin{equation}
	y_{\text{com},n}^{\nu }(p)=\sum\limits_{\mu =1}^{{{N}_{t}}}{{{H}_{\mu \nu }}}(p)x_{n}^{\mu }(p)+w_{\text{com},n}^{\nu }(p)\label{eq4},
\end{equation}
where ${{H}_{\mu \nu }}(p)=\sum\nolimits_{l=0}^{L_c}{{{h}_{\mu \nu }}(l){{e}^{-j(2\pi /{{N}_{c}})lp}}}$, and $w_{\text{com},n}^{\nu }(p)$ follows the i.i.d. zero-mean CSCG distribution with variance $\sigma _{n}^{2}$, i.e.,  $\mathcal{C}\mathcal{N}(0,\sigma _{n}^{2})$.

For the convenience of subsequent derivation, the subscripts of the antenna are omitted. With OFDM symbols, the received signal at the $p$-th subcarrier, denoted by ${{\mathbf{Y}}_{\text{com}}}(p)$, is 
\begin{equation}
	{{\mathbf{Y}}_{\text{com}}}(p)=\mathbf{X}(p)\mathbf{H}(p)+{{\mathbf{W}}_{\text{com}}}(p)\label{eq5},
\end{equation}
where $\mathbf{H}(p)\in{{\mathbb{C}}^{{{N}_{t}}\times {{N}_{r}}}}$ is the frequency domain channel coefficient matrix of the channel at $p$-th subcarrier, $\mathbf{X}(p)$ is the transmit data at $p$-th subcarrier, and ${{\mathbf{W}}_{\text{com}}}(p)$ is the additive white Gaussian noise (AWGN) at the $p$-th subcarrier, as given by
\begin{equation}
		\mathbf{H}(p)=\sum\nolimits_{l=0}^{L_c}{{{\mathbf{H}}_{l}}{{e}^{-j(2\pi /{{N}_{c}})lp}}}\in {{\mathbb{C}}^{{{N}_{t}}\times {{N}_{r}}}},
\end{equation}
\begin{equation}
		\mathbf{X}(p)=\left[ \begin{matrix}
			x_{0}^{1}(p) & \ldots  & x_{0}^{{{N}_{t}}}(p)  \\
			\vdots  & {} & \vdots   \\
			x_{{{N}_{x}}-1}^{1}(p) & \ldots  & x_{{{N}_{x}}-1}^{{{N}_{t}}}(p)  \\
		\end{matrix} \right]\in {{\mathbb{C}}^{{{N}_{x}}\times {{N}_{t}}}}, 
\end{equation}
\begin{equation}
{{\mathbf{Y}}_{\text{com}}}(p)=\left[ \begin{matrix}
			y_{\text{com},0}^{1}(p) & \ldots  & y_{\text{com},0}^{{{N}_{r}}}(p)  \\
			\vdots  & {} & \vdots   \\
			y_{\text{com},{{N}_{x}}-1}^{1}(p) & \ldots  & y_{\text{com},{{N}_{x}}-1}^{{{N}_{r}}}(p)  \\
		\end{matrix} \right]\in {{\mathbb{C}}^{{{N}_{x}}\times {{N}_{r}}}},
\end{equation}
\begin{equation}
{{\mathbf{W}}_{\text{com}}}(p)\!=\!\left[ \begin{matrix}
			w_{\text{com},0}^{1}(p) & \ldots  & w_{\text{com},0}^{{{N}_{r}}}(p)  \\
			\vdots  & {} & \vdots   \\
			w_{\text{com},{{N}_{x}}-1}^{1}(p) & \ldots  & w_{\text{com},{{N}_{x}}-1}^{{{N}_{r}}}(p)  \\
		\end{matrix} \right]\!\in\! {{\mathbb{C}}^{{{N}_{x}}\times {{N}_{r}}}}.
\end{equation}

Thus, the received signal in matrix form is 
\begin{equation}
	{{\mathbf{Y}}_{\text{com}}}=\mathbf{XH}+{{\mathbf{W}}_{\text{com}}},\label{eq6}
\end{equation}
where the communication channel matrix is
\begin{equation}\label{eqh}
\mathbf H={{[{{\mathbf H}^{T}}(0),\cdots ,{{\mathbf H}^{T}}({{N}_{c}}-1)]}^{T}}\in {{\mathbb{C}}^{{{N}_{c}}{{N}_{t}}\times {{N}_{r}}}}, 
\end{equation}
the transmit data matrix is
\begin{equation}
\mathbf{X}=\text{diag}\{\mathbf{X}(0),\cdots ,\mathbf{X}({{N}_{c}}-1)\}\in {{\mathbb{C}}^{{{N}_{c}}{{N}_{x}}\times {{N}_{c}}{{N}_{t}}}}, 
\end{equation}
the received communication signal matrix is 
\begin{equation}
{{\mathbf{Y}}_{\text{com}}}={{[\mathbf{Y}_{\text{com}}^{T}(0),\cdots ,\mathbf{Y}_{\text{com}}^{T}({{N}_{c}}-1)]}^{T}}\in {{\mathbb{C}}^{{{N}_{c}}{{N}_{x}}\times {{N}_{r}}}}, 
\end{equation}
and the noise matrix is
\begin{equation}
{{\mathbf{W}}_{\text{com}}}={{[\mathbf{W}_{\text{com}}^{T}(0),\cdots ,\mathbf{W}_{\text{com}}^{T}({{N}_{c}}-1)]}^{T}}\in {{\mathbb{C}}^{{{N}_{c}}{{N}_{x}}\times {{N}_{r}}}}\!.
\end{equation}


\subsection{Sensing Models}
Since radar detects the target through the line-of-sight (LoS) path and the non-LoS (NLoS) path will cause false alarms. Hence, the LoS path is considered in the sensing channel model of this paper. Assuming that there are $({{L}_{r}}+1)$ point targets, the location of the $l$-th target is revealed by the time delay ${{\tau }_{l}}$ and the velocity information of the $l$-th target is revealed by the Doppler frequency shift ${{v}_{l}}$. The sensing channel impulse response is defined as
\begin{equation}
	g(t,\tau )=\sum\limits_{l=0}^{{{L}_{r}}}{{{g}_{l}}(t)\delta (\tau -{{\tau }_{l}})}{{e}^{j2\pi {{v}_{l}}t}},\label{eq9}
\end{equation}
where ${{g}_{l}}(t)$ is the impulse response for the $l$-th path, which is independent of other $(L_r - 1)$ paths. The $l$-th sensing channel is the discrete baseband equivalent impulse response matrix, as given by
\begin{equation}
	{{\mathbf{G}}_{l}}=\left( \begin{matrix}
		{{g}_{11}}(l) & \ldots  & {{g}_{1{{N}_{r}}}}(l)  \\
		\vdots  & \ddots  & \vdots   \\
		{{g}_{{{N}_{t}}1}}(l) & \cdots  & {{g}_{{{N}_{t}}{{N}_{r}}}}(l)  \\
	\end{matrix} \right)\in {{\mathbb{C}}^{{{N}_{t}}\times {{N}_{r}}}},\label{eq10}
\end{equation}
where ${{g}_{\mu \nu }}(l)$ is the sensing channel time domain coefficient between the $\mu $-th transmit and the $\nu $-th receive antennas on the $l$-th path, including path loss, Doppler frequency shift, direction of departure (DoD), direction of arrival (DoA) and radar cross section (RCS).

 \begin{figure}[t]
	\centering
	\includegraphics[width=0.5\textwidth]{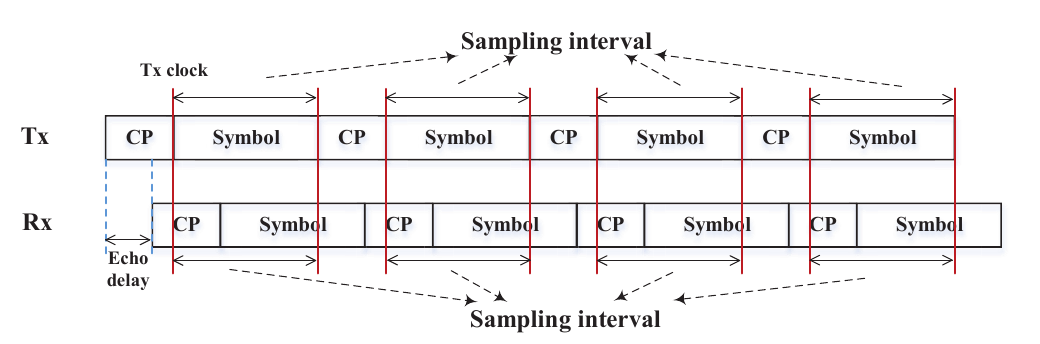}
	\caption{The removal of CP}
	\label{huifu}
\end{figure}

The radar receiver receives a frequency-dependent signal reflected by the target. The clocks of the transmitter (node A) and the radar receiver (node A) are the same and are regarded as synchronized. Besides, it is generally assumed that the echo delay does not exceed the CP length. Hence, it is reasonable to sample the received signal according to the clock of the transmitter with the time delay unknown, which is shown in Fig. 2. After removing the cyclic prefix and completing FFT processing at the radar receiver, the received echo data on the $\nu $-th receive antenna are obtained as
\begin{equation}
	y_{\text{rad},n}^{\nu }(p)=\sum\limits_{\mu =1}^{{{N}_{t}}}{G_{\mu \nu }^{n}}(p)x_{n}^{\mu }(p)+w_{\text{rad},n}^{\nu }(p),\label{eq11}
\end{equation}
where $G_{\mu \nu }^{n}(p)=\sum\nolimits_{l=0}^{L_r}{g_{\mu \nu }^{n}(l){{e}^{-j(2\pi /{{N}_{c}})lp}}}$ $(n=1,\cdots ,{{N}_{x}})$, and $w_{\text{rad},n}^{\nu }(p)$ follows the i.i.d. zero-mean CSCG distribution with variance $\sigma _{n}^{2}$, i.e.,  $\mathcal{C}\mathcal{N}(0,\sigma _{n}^{2})$.

As the sensing duration is extremely short, we assume that the sensing channel is time-invariant during the sensing period, indicating that the radar and the target remain relatively static, so that the expression of the echo signal is
\begin{equation}
	y_{\text{rad},n}^{\nu }(p)=\sum\nolimits_{\mu =1}^{{{N}_{t}}}{{{G}_{\mu \nu }}(p)x_{n}^{\mu }(p)+w_{\text{rad},n}^{\nu }(p)}.
\end{equation}

Although the time-invariant sensing channel is assumed during the sensing period, the results can also be applied to the case of the time-variant sensing channel in (\ref{eq11}).
The echo signal at the $p$-th subcarrier is 
\begin{equation}
	{{\mathbf{Y}}_{\text{rad}}}(p)=\mathbf{X}(p)\mathbf{G}(p)+{{\mathbf{W}}_{\text{rad}}}(p),\label{eq13}
\end{equation}
where $\mathbf{G}(p)=\sum\nolimits_{l=0}^{L_r}{{{\mathbf{G}}_{l}}{{e}^{-j(2\pi /{{N}_{c}})lp}}}\in {\mathbb{C}}^{{{N}_{t}}\times {{N}_{r}}}$ is the sensing channel coefficient matrix at $p$-th subcarrier. The received echo signal including ${{N}_{c}}$ subcarriers is
\begin{equation}
	{{\mathbf{Y}}_{\text{rad}}}=\mathbf{XG}+{{\mathbf{W}}_{\text{rad}}},\label{eq14}
\end{equation}
where the sensing channel matrix is
\begin{equation}
\mathbf{G}={{[{{\mathbf{G}}^{T}}(0),\cdots ,{{\mathbf{G}}^{T}}({{N}_{c}}-1)]}^{T}}\in {{\mathbb{C}}^{{{N}_{c}}{{N}_{t}}\times {{N}_{r}}}}, 
\end{equation}
the received echo signal matrix is 
\begin{equation}
{{\mathbf{Y}}_{\text{rad}}}={{[\mathbf{Y}_{\text{rad}}^{T}(0),\cdots ,\mathbf{Y}_{\text{rad}}^{T}({{N}_{c}}-1)]}^{T}}\in {{\mathbb{C}}^{{{N}_{c}}{{N}_{x}}\times {{N}_{r}}}}, 
\end{equation}
and the noise matrix is
\begin{equation}
{{\mathbf{W}}_{\text{rad}}}={{[\mathbf{W}_{\text{rad}}^{T}(0),\cdots ,\mathbf{W}_{\text{rad}}^{T}({{N}_{c}}-1)]}^{T}}\in {{\mathbb{C}}^{{{N}_{c}}{{N}_{x}}\times {{N}_{r}}}}.
\end{equation}
The channel correlation matrix of the sensing channel $\mathbf{G}$ is
\begin{equation}\label{eq15}
\begin{small}
\begin{aligned}
& E\left[ \mathbf{G}{{\mathbf{G}}^{H}} \right]=\!\!E\left\{ \left[\! \begin{matrix}
\mathbf{G}(0)  \\
\vdots   \\
\mathbf{G}({{N}_{c}}-1)  \\
\end{matrix} \right]\!\!\!\left[ {{\mathbf{G}}^{H}}(0),\cdots ,{{\mathbf{G}}^{H}}({{N}_{c}}-1) \right] \right\} \\ 
& =\!\!\left( \begin{matrix}
E[\mathbf{G}(0){{\mathbf{G}}^{H}}(0)] &\!\!\! \ldots  \!\!\!& E[\mathbf{G}(0){{\mathbf{G}}^{H}}({{N}_{c}}-1)]  \\
\vdots\!\!\!  & \ddots\!\!\!  & \vdots\!\!\!   \\
E[\mathbf{G}({{N}_{c}}-1){{\mathbf{G}}^{H}}(0)] \!\!\!&\!\!\! \ldots\!\!\!  & E[\mathbf{G}({{N}_{c}}-1){{\mathbf{G}}^{H}}({{N}_{c}}-1)]  \\
\end{matrix} \right)\!\!. \\ 
\end{aligned}
\end{small}
\end{equation}
Note that the sensing channel correlation matrix is dependent on the subcarrier correlation due to the frequency response of the frequency-selective fading channel \cite{cao}. Therefore, the correlation matrix between the different subcarriers is derived as 
\begin{equation}
\begin{split}
&E\left[ \mathbf{G}({{p}_{1}}){{\mathbf{G}}^{H}}({{p}_{2}}) \right]   \\
&=E\left[ \sum\nolimits_{l=0}^{L_r}{{{\mathbf{G}}_{l}}{{e}^{-j(2\pi /{{N}_{c}})l{{p}_{1}}}}}\sum\nolimits_{l=0}^{L_r}{\mathbf{G}_{l}^{H}{{e}^{j(2\pi /{{N}_{c}})l{{p}_{2}}}}} \right].\label{eq16}
\end{split}
\end{equation}	

According to (\ref{eq9}), the fading is independent of different paths and follows CSCG distribution, i.e., $E[{{g}_{\mu \nu }}(l){{g}_{\mu \nu }}({l}')]=0~(l \ne l')$. When expanding (\ref{eq16}), we have \cite{cao}\cite{zhu}
\begin{equation}
E\left[ \mathbf{G}({{p}_{1}}){{\mathbf{G}}^{H}}({{p}_{2}}) \right]=\!E\left[\! \sum\nolimits_{l=0}^{L_r}{{{\mathbf{G}}_{l}}\mathbf{G}_{l}^{H}{{e}^{-j(2\pi /{{N}_{c}})l({{p}_{1}}-{{p}_{2}})}}}\! \right],\label{eq17}
\end{equation}
which shows that the influence of subcarrier correlation using the sensing model in (\ref{eq14}) cannot be ignored. The subcarrier correlation decreases with the increase of subcarrier spacing. 


\section{Mutual Information}
Based on the system model in Sec. \uppercase\expandafter{\romannumeral2}, the sensing and communication MI for the MIMO-OFDM ISAC system are derived in this section. Firstly, we derive the sensing MI considering the correlation of subcarriers. Then, we derive the closed-form communication MI. The expressions in this section provide a basis for the ISAC waveform design.
\subsection{Sensing MI}\label{AA}
Sensing MI is defined as the MI between the radar echo signal and the sensing channel given the prior knowledge of the transmitted signal, from which the mathematical expression of sensing MI can be written as
\begin{equation}
	I(\mathbf{G};{{\mathbf{Y}}_{\text{rad}}}|\mathbf{X})=h({{\mathbf{Y}}_{\text{rad}}}|\mathbf{X})-h({{\mathbf{W}}_{\text{rad}}}).\label{eq18}
\end{equation}

\begin{theorem}
    According to the MIMO-OFDM signal model in (\ref{eq14}), the sensing MI of the MIMO-OFDM ISAC system is expressed as
\begin{equation}
	\begin{split}
		&I(\mathbf{G};{{\mathbf{Y}}_{\text{rad}}}|\mathbf{X})\\
		&={{N}_{r}}{{\log }_{2}}\left[ {{(\sigma _{n}^{2})}^{-{{N}_{x}}{{N}_{c}}}}\det (\mathbf{X}{{\Sigma }_{\mathbf{G}}}{{\mathbf{X}}^{H}}+\sigma _{n}^{2}{{\mathbf{I}}_{{{N}_{x}}{{N}_{c}}}}) \right].\label{eq19}
	\end{split}
\end{equation}	
\end{theorem}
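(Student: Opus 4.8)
The plan is to evaluate the two differential-entropy terms in the definition (\ref{eq18}) directly, using the fact that, conditioned on $\mathbf{X}$, both $\mathbf{Y}_{\text{rad}}$ and $\mathbf{W}_{\text{rad}}$ are circularly-symmetric complex Gaussian. Since $\mathbf{G}$ has zero-mean CSCG entries and the model (\ref{eq14}) is linear in $\mathbf{G}$ and $\mathbf{W}_{\text{rad}}$, the conditional law of $\mathbf{Y}_{\text{rad}}$ given $\mathbf{X}$ is Gaussian, so each entropy is given in closed form by the standard rule $h(\mathbf{z})=\log_2\det(\pi e\,\mathbf{K}_{\mathbf{z}})$ for a complex Gaussian vector $\mathbf{z}$ with covariance $\mathbf{K}_{\mathbf{z}}$. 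The proof then reduces to identifying the two covariance matrices and simplifying their ratio of determinants.

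First I would exploit the spatial structure. Because the receive antennas are spatially uncorrelated and identically distributed, the $N_r$ columns of $\mathbf{G}$ are i.i.d., and the $N_r$ columns of $\mathbf{W}_{\text{rad}}$ are i.i.d.\ across receive antennas; hence the columns of $\mathbf{Y}_{\text{rad}}$ are conditionally independent given $\mathbf{X}$. This lets me write each matrix differential entropy as $N_r$ times the entropy of a single length-$N_cN_x$ column, which is where the leading factor $N_r$ in (\ref{eq19}) comes from. For the noise, each column has covariance $\sigma_n^2\mathbf{I}_{N_xN_c}$, so $h(\mathbf{W}_{\text{rad}})=N_r\log_2\det(\pi e\,\sigma_n^2\mathbf{I}_{N_xN_c})$.

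Next I would compute the per-column covariance of $\mathbf{Y}_{\text{rad}}$ given $\mathbf{X}$. Writing $\mathbf{g}_\nu$ and $\mathbf{w}_\nu$ for the $\nu$-th columns of $\mathbf{G}$ and $\mathbf{W}_{\text{rad}}$, the $\nu$-th column of $\mathbf{Y}_{\text{rad}}$ equals $\mathbf{X}\mathbf{g}_\nu+\mathbf{w}_\nu$; as $\mathbf{g}_\nu$ and $\mathbf{w}_\nu$ are independent and zero-mean, its covariance is $\mathbf{X}\,E[\mathbf{g}_\nu\mathbf{g}_\nu^H]\,\mathbf{X}^H+\sigma_n^2\mathbf{I}_{N_xN_c}=\mathbf{X}\Sigma_{\mathbf{G}}\mathbf{X}^H+\sigma_n^2\mathbf{I}_{N_xN_c}$, where $\Sigma_{\mathbf{G}}$ is the common column covariance of $\mathbf{G}$ assembled from the subcarrier-correlation blocks (\ref{eq15})--(\ref{eq17}). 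Hence $h(\mathbf{Y}_{\text{rad}}|\mathbf{X})=N_r\log_2\det\!\left(\pi e\,(\mathbf{X}\Sigma_{\mathbf{G}}\mathbf{X}^H+\sigma_n^2\mathbf{I}_{N_xN_c})\right)$. Subtracting the two entropies, the $\pi e$ factors cancel and the ratio of determinants reduces to $\det(\mathbf{X}\Sigma_{\mathbf{G}}\mathbf{X}^H+\sigma_n^2\mathbf{I})/\det(\sigma_n^2\mathbf{I})$; using $\det(\sigma_n^2\mathbf{I}_{N_xN_c})=(\sigma_n^2)^{N_xN_c}$ then yields (\ref{eq19}).

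I expect the main obstacle to be the bookkeeping that justifies the column decoupling, in particular pinning down that every one of the $N_r$ columns of $\mathbf{G}$ shares the identical covariance $\Sigma_{\mathbf{G}}$ so that the factor $N_r$ is exact rather than a sum of distinct per-antenna terms. This rests on the i.i.d.\ receive-antenna assumption and on the zero-mean Gaussianity of $\mathbf{G}$, which is also needed for the conditional Gaussianity of $\mathbf{Y}_{\text{rad}}$ and for the cross term $E[\mathbf{X}\mathbf{g}_\nu\mathbf{w}_\nu^H]$ to vanish. A careful statement of these hypotheses, together with the block structure of $\Sigma_{\mathbf{G}}$ implied by (\ref{eq17}), is the step I would treat most carefully; the remaining determinant simplification is routine.
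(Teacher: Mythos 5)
Your proposal is correct and follows essentially the same route as the paper's proof: both exploit the conditional independence of the $N_r$ receive-antenna columns to reduce the matrix entropies to $N_r$ copies of a single Gaussian column entropy with covariance $\mathbf{X}\Sigma_{\mathbf{G}}\mathbf{X}^{H}+\sigma_n^2\mathbf{I}_{N_xN_c}$, identify $\Sigma_{\mathbf{G}}=E\{\mathbf{G}\mathbf{G}^{H}\}/N_r$ as the common per-column covariance, and subtract the noise entropy so the $\pi e$ factors cancel. The only cosmetic difference is that the paper writes out the conditional PDF explicitly before reading off the entropy, whereas you invoke the standard formula $h(\mathbf{z})=\log_2\det(\pi e\,\mathbf{K}_{\mathbf{z}})$ directly.
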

\begin{proof}
    With the knowledge of transmitted symbols, the probability density function (PDF) of received data ${{\mathbf{Y}}_{\text{rad}}}$ is
\begin{equation}
	\begin{split}
		p({{\mathbf{Y}}_{\text{rad}}}|\mathbf{X})&=\prod\limits_{j=1}^{{{N}_{r}}}{p(\mathbf{y}_{\text{rad}}^{j}|\mathbf{X})}\\
		&=\prod\limits_{j=1}^{{{N}_{r}}}{\frac{1}{{{\pi }^{{{N}_{x}}{{N}_{c}}}}\left| \mathbf{\Lambda}  \right|}\exp \left( -{{(\mathbf{y}_{\text{rad}}^{j})}^{H}}{{\mathbf{\Lambda} }^{-1}}\mathbf{y}_{\text{rad}}^{j} \right)}.
	\end{split}\label{eq20}
\end{equation}

In (\ref{eq20}), the data received on each antenna are independent of each other, so that the PDF of ${{\mathbf{Y}}_{\text{rad}}}$ is the product of the PDF of $\mathbf{y}_{\text{rad}}^{j}$ on each receive antenna. Note that $E\{\mathbf{G}{{\mathbf{G}}^{H}}\}/{{N}_{r}}=E\{{{\mathbf{G}}^{j}}{{({{\mathbf{G}}^{j}})}^{H}}\}={{\Sigma }_{\mathbf{G}}}$. Since $\mathbf{y}_{\text{rad}}^{j}$ is a complex Gaussian vector, its covariance matrix is $\mathbf{\Lambda }\text{=}\mathbf{X}{{\Sigma }_{\mathbf{G}}}{{\mathbf{X}}^{H}}$
$+\sigma _{n}^{2}{{\mathbf{I}}_{{{N}_{x}}{{N}_{c}}}}$.

Since we have
\begin{equation}
\begin{aligned}
 &{{(\mathbf{y}_{\text{rad}}^{j})}^{H}}{{(\mathbf{X}{{\Sigma }_{\mathbf{G}}}{{\mathbf{X}}^{H}}+\sigma _{n}^{2}{{\mathbf{I}}_{{{N}_{x}}{{N}_{c}}}})}^{-1}}\mathbf{y}_{\text{rad}}^{j} \\
 &=\text{tr}\left( {{(\mathbf{X}{{\Sigma }_{\mathbf{G}}}{{\mathbf{X}}^{H}}+\sigma _{n}^{2}{{\mathbf{I}}_{{{N}_{x}}{{N}_{c}}}})}^{-1}}
\mathbf{y}_{\text{rad}}^{j}{{(\mathbf{y}_{\text{rad}}^{j})}^{H}}\right), 
\end{aligned}
\end{equation}
(\ref{eq20}) can be further transformed to
\begin{equation}
	\begin{aligned}
		& p({{\mathbf{Y}}_{\text{rad}}}|\mathbf{X})\\	
		& =\frac{1}{{{\pi }^{{{N}_{x}}{{N}_{c}}{{N}_{r}}}}{{\det }^{{{N}_{r}}}}(\mathbf{X}{{\Sigma }_{\mathbf{G}}}{{\mathbf{X}}^{H}}+\sigma _{n}^{2}{{\mathbf{I}}_{{{N}_{x}}{{N}_{c}}}})}\\
		 &\times\exp \left( -\text{tr}\left( \sum\limits_{j=1}^{{{N}_{r}}}{{{(\mathbf{X}{{\Sigma }_{\mathbf{G}}}{{\mathbf{X}}^{H}}+\sigma _{n}^{2}{{\mathbf{I}}_{{{N}_{x}}{{N}_{c}}}})}^{-1}}\mathbf{y}_{\text{rad}}^{j}{{(\mathbf{y}_{\text{rad}}^{j})}^{H}}} \right)\!\!\right)\!,\label{eq21}
	\end{aligned}
\end{equation}

The correlation matrix of $\mathbf{y}_{\text{rad}}^{j}$ is
\begin{equation}
	E\{\mathbf{y}_{\text{rad}}^{j}{{(\mathbf{y}_{\text{rad}}^{j})}^{H}}\}=\mathbf{X}E\{{{\mathbf{G}}^{j}}{{({{\mathbf{G}}^{j}})}^{H}}\}{{\mathbf{X}}^{H}}+\sigma _{n}^{2}{{\mathbf{I}}_{{{N}_{x}}{{N}_{c}}}}.\label{eq22}
\end{equation}

By substituting (\ref{eq22}) into (\ref{eq21}), $h({{\mathbf{Y}}_{\text{rad}}}|\mathbf{X})$ is obtained as
\begin{equation}
	\begin{split}
		h({{\mathbf{Y}}_{\text{rad}}}|\mathbf{X})=&{{N}_{x}}{{N}_{r}}{{N}_{c}}{{\log }_{2}}\pi +{{N}_{x}}{{N}_{r}}{{N}_{c}}{{\log }_{2}}e\\
		&+{{N}_{r}}{{\log }_{2}}\left[ \det (\mathbf{X}{{\Sigma }_{\mathbf{G}}}{{\mathbf{X}}^{H}}+\sigma _{n}^{2}{{\mathbf{I}}_{{{N}_{x}}{{N}_{c}}}}) \right].\label{eq23}
	\end{split}
\end{equation}	

Similarly, the noise entropy can be obtained as
\begin{equation}
	\begin{split}
		h({{\mathbf{W}}_{\text{rad}}})=&{{N}_{x}}{{N}_{c}}{{N}_{r}}{{\log }_{2}}\pi +{{N}_{x}}{{N}_{c}}{{N}_{r}}{{\log }_{2}}e\\
		&+{{N}_{r}}{{\log }_{2}}\left[ \det (\sigma _{n}^{2}{{\mathbf{I}}_{{{N}_{x}}{{N}_{c}}}}) \right].
	\end{split}\label{eq24}
\end{equation}	
Finally, the sensing MI in (\ref{eq19}) is obtained by substituting (\ref{eq23}) and (\ref{eq24}) into (\ref{eq18}).
\end{proof}

The sensing MI derived above can be utilized for performance analysis and waveform design, revealing how much reflection from unknown targets can be captured for sensing. Note that the result in (\ref{eq19}) considers correlation in frequency doimain. Thus, we discuss the effect of subcarrier correlation as follows.


\begin{corollary}
	If the sensing channel correlation of different subcarriers is not considered in (\ref{eq19}), the sensing MI of the MIMO-OFDM ISAC system is
	\begin{equation}
	\begin{split}
	&I(\mathbf{G'};{{\mathbf{Y}}_{\text{rad}}}|\mathbf{X}) \\
	&={{N}_{r}}{{\log }_{2}}\left[ {{(\sigma _{n}^{2})}^{-{{N}_{x}}{{N}_{c}}}}\det (\mathbf{X}{{\Sigma }_{\mathbf{G'}}}{{\mathbf{X}}^{H}}+\sigma _{n}^{2}{{\mathbf{I}}_{{{N}_{x}}{{N}_{c}}}}) \right],\label{eq19n}
	\end{split}
	\end{equation}
	where $\mathbf{G'}$ is the sensing channel without considering the correlation of different subcarriers,  and the covariance matrix of $\mathbf{G'}$ is 
\begin{equation}\label{eq15n}
\begin{aligned}
 {\Sigma }_{\mathbf{G'}} 
 =\!\!\left( \begin{matrix}
E[\mathbf{G}(0){{\mathbf{G}}^{H}}(0)] &\!\!\! \ldots  \!\!\!& 0  \\
\vdots\!\!\!  & \ddots\!\!\!  & \vdots\!\!\!   \\
0 \!\!\!&\!\!\! \ldots\!\!\!  & E[\mathbf{G}({{N}_{c}}-1){{\mathbf{G}}^{H}}({{N}_{c}}-1)]  \\
\end{matrix} \right)\!\!. \\ 
\end{aligned}
\end{equation}
\end{corollary}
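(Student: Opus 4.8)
The plan is to observe that the entire derivation of Theorem~1 hinges only on two facts: that $\mathbf{Y}_{\text{rad}} = \mathbf{X}\mathbf{G} + \mathbf{W}_{\text{rad}}$ with $\mathbf{W}_{\text{rad}}$ white Gaussian, and that each column $\mathbf{G}^j$ of the channel is a zero-mean complex Gaussian vector with covariance $\Sigma_{\mathbf{G}} = E\{\mathbf{G}^j(\mathbf{G}^j)^H\}$. Nowhere in that proof is the particular block structure of $\Sigma_{\mathbf{G}}$ exploited; the expression $(\sigma_n^2)^{-N_x N_c}\det(\mathbf{X}\Sigma_{\mathbf{G}}\mathbf{X}^H + \sigma_n^2 \mathbf{I})$ follows purely from applying the Gaussian entropy formula to the conditional covariance $\mathbf{\Lambda} = \mathbf{X}\Sigma_{\mathbf{G}}\mathbf{X}^H + \sigma_n^2\mathbf{I}$. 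I would therefore reduce the corollary to re-running that same computation after replacing $\mathbf{G}$ by the decorrelated channel $\mathbf{G'}$.

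First I would define $\mathbf{G'}$ precisely as the jointly Gaussian channel whose per-subcarrier (diagonal-block) statistics coincide with those of $\mathbf{G}$ but whose cross-subcarrier correlations $E[\mathbf{G}(p_1)\mathbf{G}^H(p_2)]$ with $p_1 \neq p_2$ are set to zero. The covariance of each column of $\mathbf{G'}$ is then exactly the block-diagonal matrix $\Sigma_{\mathbf{G'}}$ of (\ref{eq15n}). A small but necessary check at this stage is that $\Sigma_{\mathbf{G'}}$ is a legitimate covariance matrix, i.e. positive semidefinite: this is immediate because its diagonal blocks $E[\mathbf{G}(p)\mathbf{G}^H(p)]$ are themselves positive-semidefinite covariance matrices, and a block-diagonal matrix assembled from positive-semidefinite blocks is positive semidefinite. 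Hence $\mathbf{G'}$ is a well-defined Gaussian channel and the substitution is internally consistent.

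Next I would carry $\Sigma_{\mathbf{G'}}$ through the steps (\ref{eq20})--(\ref{eq24}) in place of $\Sigma_{\mathbf{G}}$. Because $\mathbf{G'}$ is zero-mean Gaussian, $\mathbf{Y}_{\text{rad}}$ remains conditionally Gaussian given $\mathbf{X}$, now with conditional covariance $\mathbf{\Lambda}' = \mathbf{X}\Sigma_{\mathbf{G'}}\mathbf{X}^H + \sigma_n^2\mathbf{I}_{N_x N_c}$. The differential entropy $h(\mathbf{Y}_{\text{rad}}|\mathbf{X})$ then carries $\det(\mathbf{\Lambda}')$ in place of $\det(\mathbf{\Lambda})$, while the noise entropy $h(\mathbf{W}_{\text{rad}})$ of (\ref{eq24}) is untouched. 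Subtracting as in (\ref{eq18}) cancels the common $N_x N_c N_r(\log_2\pi + \log_2 e)$ terms together with the $\det(\sigma_n^2\mathbf{I})$ contribution, yielding precisely (\ref{eq19n}).

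I do not expect a genuine technical obstacle here, as the statement is a direct specialization of Theorem~1; the only point requiring care is the conceptual one of arguing that zeroing the off-diagonal blocks defines a valid Gaussian model rather than amounting to an ad hoc truncation of the matrix in (\ref{eq19}). Making that modeling assumption explicit, namely that $\mathbf{G'}$ is \emph{defined} to carry no inter-subcarrier correlation, is exactly what licenses substituting $\Sigma_{\mathbf{G'}}$ into the already-proven formula, and the positive-semidefiniteness check above is what confirms the substitution is legitimate.
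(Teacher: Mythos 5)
Your proposal is correct and follows essentially the same route as the paper: the paper likewise sets the cross-subcarrier blocks $E[\mathbf{G}(p_1)\mathbf{G}^H(p_2)]$ to zero for $p_1\neq p_2$, obtains the block-diagonal $\Sigma_{\mathbf{G'}}$, and substitutes it directly into the Theorem~1 formula. Your added remarks on positive semidefiniteness and on why the block structure of $\Sigma_{\mathbf{G}}$ is never exploited in the Theorem~1 derivation are sound elaborations of the same argument rather than a different approach.
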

\begin{proof}
	According to (\ref{eq15}), if the sensing channel correlation of different subcarriers is not considered, $E\left[ \mathbf{G}({{p}_{1}}){{\mathbf{G}}^{H}}({{p}_{2}}) \right]=\mathbf 0$, if $p_1 \ne p_2$. Thus, the channel correlation matrix is (\ref{eq15n}) and the sensing MI without considering the channel correlation of different subcarriers is obtained as (\ref{eq19n}).
\end{proof}

According to the above theoretical analysis, it can be found that ignoring the subcarrier correlation of the sensing channel will affect the performance of sensing, since the actual sensing channel still has the subcarrier correlation.

\begin{corollary}
	If the sensing channel is independent in both the spatial and frequency domains, the sensing MI of the MIMO-OFDM system is 
		\begin{equation}
	\begin{split}
	&I(\mathbf{G''};{{\mathbf{Y}}_{\text{rad}}}|\mathbf{X}) \\
	&={{N}_{r}}{{\log }_{2}}\left[ {{(\sigma _{n}^{2})}^{-{{N}_{x}}{{N}_{c}}}}\det (\mathbf{X}{{\mathbf{X}}^{H}}+\sigma _{n}^{2}{{\mathbf{I}}_{{{N}_{x}}{{N}_{c}}}}) \right],\label{eq19n2}
	\end{split}
	\end{equation}
	where $\mathbf{G''}$ is the sensing channel without considering the correlation in both the spatial and frequency domains.
\end{corollary}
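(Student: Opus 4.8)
The plan is to treat Corollary 2 as the natural further specialization of Theorem 1 and Corollary 1, so that no new probabilistic computation is needed: everything reduces to identifying the covariance matrix $\Sigma_{\mathbf{G''}}$ and substituting it into the already-established closed form (\ref{eq19}). First I would note that in (\ref{eq19}) the channel statistics enter \emph{only} through $\Sigma_{\mathbf{G}}=E\{\mathbf{G}^{j}(\mathbf{G}^{j})^{H}\}$, the $N_{c}N_{t}\times N_{c}N_{t}$ covariance of a single receive-antenna column. Hence it suffices to show that, under independence in \emph{both} the spatial and frequency domains, this matrix collapses to the identity, $\Sigma_{\mathbf{G''}}=\mathbf{I}_{N_{c}N_{t}}$, after which the result is immediate.

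Second, I would peel off the two correlations in turn. Removing the frequency-domain (subcarrier) correlation is exactly Corollary 1: by (\ref{eq15}) the off-diagonal blocks $E[\mathbf{G}(p_{1})\mathbf{G}^{H}(p_{2})]$ with $p_{1}\ne p_{2}$ vanish, leaving the block-diagonal $\Sigma_{\mathbf{G'}}$ of (\ref{eq15n}) whose diagonal blocks are the per-subcarrier spatial covariances $E[\mathbf{G}(p)\mathbf{G}^{H}(p)]$. Removing the spatial correlation then acts \emph{inside} each such block: the receive elements are already assumed uncorrelated in the model, so adding the assumption that the transmit elements are uncorrelated—together with the unit-variance CSCG normalization used for the channel coefficients as in (\ref{eq3})—reduces each $N_{t}\times N_{t}$ diagonal block to $\mathbf{I}_{N_{t}}$. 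Stacking the $N_{c}$ identical identity blocks gives $\Sigma_{\mathbf{G''}}=\mathbf{I}_{N_{c}N_{t}}$.

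Finally, I would substitute $\Sigma_{\mathbf{G''}}=\mathbf{I}_{N_{c}N_{t}}$ into (\ref{eq19}): the term $\mathbf{X}\Sigma_{\mathbf{G''}}\mathbf{X}^{H}$ becomes $\mathbf{X}\mathbf{X}^{H}$, and the expression collapses directly to (\ref{eq19n2}). The main obstacle is not algebraic but definitional: one must pin down the normalization so that each diagonal block is \emph{exactly} $\mathbf{I}_{N_{t}}$ rather than a scaled identity $c\,\mathbf{I}_{N_{t}}$. This requires checking both that the sensing-channel coefficients inherit the unit-variance convention and that the $N_{r}$ averaging implicit in $E\{\mathbf{G}\mathbf{G}^{H}\}/N_{r}=\Sigma_{\mathbf{G}}$ is accounted for; with any other normalization the identity in (\ref{eq19n2}) would carry a scalar prefactor, so the statement implicitly fixes this convention. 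Once this is made explicit, the derivation is a one-line substitution.
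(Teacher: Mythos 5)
Your proposal is correct and follows essentially the same route as the paper: the paper likewise peels off the subcarrier correlation via Corollary 1 and then observes that, with spatial correlation also removed, the sensing channel correlation matrix reduces to the identity, which is substituted into the Corollary 1 expression to obtain (\ref{eq19n2}). Your additional remark about pinning down the unit-variance normalization so that $\Sigma_{\mathbf{G''}}$ is exactly $\mathbf{I}_{N_cN_t}$ (rather than a scaled identity) is a point the paper leaves implicit, but it does not change the argument.
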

\begin{proof}
	According to (\ref{eq17}), if the spatial correlation is not considered, the sensing channel correlation matrix is an identity matrix. (\ref{eq19n2}) is obtained by substituting the identity matrix into (\ref{eq19n}).
\end{proof}

The result in (\ref{eq19n2}) shows that without considering correlation in spatial and frequency domains, the sensing MI has nothing to do with the exact MIMO-OFDM channel and degenerates into the problem of single-carrier and single-antenna.

\subsection{Communication MI}
Communication MI is defined as the MI between the transmitted signal and the received signal given the knowledge of CSI, as given by
\begin{equation}
	I(\mathbf{X};{{\mathbf{Y}}_{\text{com}}}|\mathbf{H})=h({{\mathbf{Y}}_{\text{com}}}|\mathbf{H})-h({{\mathbf{W}}_{\text{com}}}).\label{eq25}
\end{equation}

\begin{theorem}
    According to the signal model in (\ref{eq6}), the communication MI of the MIMO-OFDM ISAC system is obtained as
\begin{equation}
	\begin{split}
		&I(\mathbf{X};{{\mathbf{Y}}_{\text{com}}}|\mathbf{H})\\
		&={{N}_{x}}{{\log }_{2}}\left[ {{(\sigma _{n}^{2})}^{-{{N}_{r}}{{N}_{c}}}}\det ({{\mathbf{H}}^{H}}{{\Sigma }_{\mathbf{X}}}\mathbf{H}+\sigma _{n}^{2}{{\mathbf{I}}_{{{N}_{r}}{{N}_{c}}}}) \right].\label{eq26}
	\end{split}
\end{equation}	
\end{theorem}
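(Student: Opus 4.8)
The plan is to mirror the proof of Theorem 1, exploiting the structural duality between the sensing and communication models. In $\mathbf{Y}_{\text{com}} = \mathbf{X}\mathbf{H} + \mathbf{W}_{\text{com}}$ the known channel $\mathbf{H}$ now plays the role that the known signal $\mathbf{X}$ played in (\ref{eq14}), while the Gaussian transmit signal $\mathbf{X}$ plays the role of the random channel $\mathbf{G}$. Consequently I expect the prefactor $N_r$ and the block dimension $N_x N_c$ appearing in (\ref{eq19}) to swap to $N_x$ and $N_r N_c$ respectively, and the conditioning matrix to enter Hermitian-conjugated as $\mathbf{H}^H \Sigma_{\mathbf{X}} \mathbf{H}$.

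Concretely, I would start from the definition (\ref{eq25}) and argue that, conditioned on $\mathbf{H}$ and with a Gaussian transmit codebook of covariance $\Sigma_{\mathbf{X}}$, the output $\mathbf{Y}_{\text{com}}$ is jointly complex Gaussian, being a linear image of the Gaussian input plus independent CSCG noise. Then I would factor the conditional PDF over OFDM symbols rather than over receive antennas: since distinct OFDM symbols carry mutually independent data and $\mathbf{W}_{\text{com}}$ is i.i.d., we have $p(\mathbf{Y}_{\text{com}}|\mathbf{H}) = \prod_{n=1}^{N_x} p(\mathbf{y}_{\text{com},n}|\mathbf{H})$, where $\mathbf{y}_{\text{com},n} \in \mathbb{C}^{N_r N_c}$ collects the received samples of the $n$-th symbol over all subcarriers and receive antennas. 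This is the communication-side analogue of the per-antenna factorization in (\ref{eq20}) and is precisely what produces the multiplicative $N_x$.

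Next I would identify the per-symbol covariance. Each $\mathbf{y}_{\text{com},n}$ is a zero-mean complex Gaussian vector whose covariance, given $\mathbf{H}$, is $\mathbf{\Lambda} = \mathbf{H}^H \Sigma_{\mathbf{X}} \mathbf{H} + \sigma_n^2 \mathbf{I}_{N_r N_c}$, obtained by propagating $\Sigma_{\mathbf{X}}$ through the deterministic channel and adding the noise floor, which is the role reversal of (\ref{eq22}). Applying the complex-Gaussian entropy formula then gives $h(\mathbf{Y}_{\text{com}}|\mathbf{H}) = N_x N_r N_c \log_2(\pi e) + N_x \log_2 \det \mathbf{\Lambda}$, in direct analogy with (\ref{eq23}), while the noise entropy is $h(\mathbf{W}_{\text{com}}) = N_x N_r N_c \log_2(\pi e) + N_x \log_2 \det(\sigma_n^2 \mathbf{I}_{N_r N_c})$, the dual of (\ref{eq24}). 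Subtracting, the $\log_2(\pi e)$ volume terms cancel and the noise determinant contributes the $(\sigma_n^2)^{-N_r N_c}$ prefactor, yielding (\ref{eq26}).

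The step I expect to require the most care is the covariance identification $\mathbf{\Lambda} = \mathbf{H}^H \Sigma_{\mathbf{X}} \mathbf{H} + \sigma_n^2 \mathbf{I}$. Unlike the sensing case, the known matrix $\mathbf{H}$ now sits to the right of the random input, so one must transpose and vectorize the subcarrier block-diagonal structure of $\mathbf{X}$ consistently to see which Hermitian-conjugated channel blocks sandwich $\Sigma_{\mathbf{X}}$, and in particular to confirm that the effective channel operator acting per OFDM symbol is the block-diagonal $\{\mathbf{H}(p)\}$ version so that the resulting matrix genuinely has dimension $N_r N_c \times N_r N_c$ rather than $N_r \times N_r$. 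Pinning down the independence-across-symbols claim that underlies the product factorization, i.e.\ that conditioning on $\mathbf{H}$ leaves the $N_x$ symbol blocks statistically decoupled, is the other place where the argument must be justified rather than merely asserted.
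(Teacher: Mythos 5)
Your proposal is correct and reaches (\ref{eq26}), but it organizes the Gaussian-entropy computation differently from the paper. The paper first factorizes $p(\mathbf{Y}_{\text{com}}|\mathbf{H})$ over all $N_x N_c$ received rows (each of dimension $N_r$), regroups them per subcarrier to obtain the $N_r\times N_r$ covariance $\mathbf{H}^H(p)\Sigma_{\mathbf{X}(p)}\mathbf{H}(p)+\sigma_n^2\mathbf{I}_{N_r}$ in (\ref{eq29})--(\ref{eq30}), arrives at the MI as a sum over subcarriers in (\ref{eq34}), and only then assembles the single $N_r N_c\times N_r N_c$ determinant using the block-diagonal determinant identity (\ref{eq35}) together with $\text{diag}\{\Sigma_{\mathbf{X}(p)}\}=\Sigma_{\mathbf{X}}$. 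You instead factorize once, over the $N_x$ OFDM symbols, and treat each symbol's $N_r N_c$-dimensional received vector as a single Gaussian with covariance $\mathbf{H}^H\Sigma_{\mathbf{X}}\mathbf{H}+\sigma_n^2\mathbf{I}_{N_rN_c}$, so the determinant identity is absorbed implicitly into the covariance identification rather than invoked at the end. The two routes rest on the same assumptions: i.i.d.\ rows of $\mathbf{X}(p)$ across symbols, independence across subcarriers inherited from the block-diagonal structure of $\mathbf{X}$, and the reading of $\mathbf{H}$ in (\ref{eq26}) as the block-diagonal $\text{diag}\{\mathbf{H}(0),\cdots,\mathbf{H}(N_c-1)\}$ rather than the stacked form (\ref{eqh}); the caveats you flag are exactly the points the paper disposes of in (\ref{eq29}) and in the remark following (\ref{eq36}). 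The paper's detour through the per-subcarrier sum (\ref{eq34}) has the side benefit of making explicit that communication MI decouples across subcarriers, which is the observation it later contrasts with the sensing MI, whereas your version is marginally more direct.
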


\begin{proof}
	With perfect CSI, the PDF of received data ${\mathbf{Y}}_{\text{com}}$ is
\begin{equation}\label{eq28}
\begin{small}
\begin{aligned}
&p({{\mathbf{Y}}_{\text{com}}}|\mathbf{H})=\prod\limits_{j=1}^{{{N}_{x}}{{N}_{c}}}{p(\mathbf{y}_{\text{com}}^{j}|\mathbf{H})} \\
&=\prod\limits_{j=1}^{{{N}_{x}}{{N}_{c}}}\frac{1}{{{\pi }^{{{N}_{r}}}}{{\det }}(\mathbf{H}^{H}{E\{{{(\mathbf{X}^{j})}^{H}}\mathbf{X}^{j}\}}{{\mathbf{H}}}+\sigma _{n}^{2}{{\mathbf{I}}_{{{N}_{r}}}})}\\
&\times\!\exp\!\!\left(\!\!-\!\left( {{{({{\mathbf{H}}^{H}}{E\{{{(\mathbf{X}^{j})}^{H}}\mathbf{X}^{j}\}}\mathbf{H}\!+\!\sigma _{n}^{2}{{\mathbf{I}}_{{{N}_{r}}}})}^{-1}}\mathbf{y}_{\text{com}}^{j}{{(\mathbf{y}_{\text{com}}^{j})}^{H}}}\!\! \right)\!\! \right)\!\!.
\end{aligned}
\end{small}
\end{equation}

Since $\mathbf X$ is a block diagonal matrix, (\ref{eq28}) can be transformed into the product of the PDF at different subcarriers, which is rewritten as
\begin{equation}\label{eq29}
\begin{small}
\begin{aligned}
&p({{\mathbf{Y}}_{\text{com}}}|\mathbf{H})\\
&=\prod\limits_{j=1}^{{{N}_{c}}}\frac{1}{{{\pi }^{{{N}_{x}}{{N}_{r}}}}{{\det }^{{{N}_{x}}}}(\mathbf{H}(p){{\Sigma }_{\mathbf{X}(p)}}{{\mathbf{H}}^{H}}(p)+\sigma _{n}^{2}{{\mathbf{I}}_{{{N}_{r}}}})}\\
&\times\!\exp\!\!\left(\!\!-\text{tr}\!\left( \sum\limits_{j=1}^{{{N}_{x}}}{{{({{\mathbf{H}}^{H}}(p){{\Sigma }_{\mathbf{X}(p)}}\mathbf{H}(p)\!+\!\sigma _{n}^{2}{{\mathbf{I}}_{{{N}_{r}}}})}^{-1}}\mathbf{y}_{\text{com},p}^{j}{{(\mathbf{y}_{\text{com},p}^{j})}^{H}}}\!\! \right)\!\! \right)\!\!,
\end{aligned}
\end{small}
\end{equation}
where 
\begin{equation}
E\{{{\mathbf{X}}^{H}}(p)\mathbf{X}(p)\}/{{N}_{x}}=E\{{{(\mathbf{X}{{(p)}^{j}})}^{H}}\mathbf{X}{{(p)}^{j}}\}={{\Sigma }_{\mathbf{X}(p)}}.
\end{equation}

In (\ref{eq29}), it is proved that the received data at each subcarrier is independent of each other with the knowledge of perfect CSI. The correlation matrix of $\mathbf{y}_{\text{com},p}^{j}$ is derived as 
\begin{equation}\label{eq30}
\begin{split}
E\{\mathbf{y}_{\text{com},p}^{j}{{(\mathbf{y}_{\text{com},p}^{j})}^{H}}\}
=\mathbf{H}{{(p)}^{H}}E\{{\Sigma }_{\mathbf{X}(p)}\}\mathbf{H}(p)+\sigma _{n}^{2}{{\mathbf{I}}_{{{N}_{r}}}}.
\end{split}
\end{equation}
The entropy matrix $h({{\mathbf{Y}}_{\text{com}}}|\mathbf{H})$ is obtained as
\begin{equation}\label{eq31}
	\begin{split}
		h({{\mathbf{Y}}_{\text{com}}}|\mathbf{H})=&\sum\limits_{p=0}^{{{N}_{c}}-1}\left\lbrace {{N}_{x}}{{N}_{r}}{{\log }_{2}}\pi +{{N}_{x}}{{N}_{r}}{{\log }_{2}}e\right.\\ 
		&\left.+{{N}_{x}}{{\log }_{2}}\left[ \det ({{\mathbf{H}}^{H}}(p){{\Sigma }_{\mathbf{X}(p)}}\mathbf{H}(p)
		+\sigma _{n}^{2}{{\mathbf{I}}_{{{N}_{r}}}})\! \right]\right\rbrace\!.
	\end{split}
\end{equation}
The noise entropy can be obtained as
\begin{equation}\label{eq32}
	\begin{split}
		h({{\mathbf{W}}_{\text{com}}})=&{{N}_{x}}{{N}_{r}}{{N}_{c}}{{\log }_{2}}\pi +{{N}_{x}}{{N}_{r}}{{N}_{c}}{{\log }_{2}}e\\
		&+{{N}_{x}}{{N}_{c}}{{\log }_{2}}\left[ \det (\sigma _{n}^{2}{{\mathbf{I}}_{{{N}_{r}}}}) \right].
	\end{split}
\end{equation}

Substituting (\ref{eq31}) and (\ref{eq32}) into (\ref{eq25}), the communication MI is obtained as
\begin{equation}\label{eq34}
	\begin{split}
		&I(\mathbf{X};{{\mathbf{Y}}_{\text{com}}}|\mathbf{H})\\
		&=\!\!{{N}_{x}}\!\!\sum\limits_{p=0}^{{{N}_{c}}-1}{{{\log }_{2}}\left[ {{(\sigma _{n}^{2})}^{-{{N}_{r}}}}\det ({{\mathbf{H}}^{H}}(p){{\Sigma }_{\mathbf{X}(p)}}\mathbf{H}(p)+\sigma _{n}^{2}{{\mathbf{I}}_{{{N}_{r}}}}) \right]}.
	\end{split}
\end{equation}

Due to the summation, the communication MI (\ref{eq34}) cannot directly represent the communication model in (\ref{eq6}). Since the determinant value of the block diagonal matrix can be transformed into
\begin{equation}\label{eq35}
	\left| \begin{matrix}
		\mathbf{A} & \mathbf{0}  \\
		\mathbf{0} & \mathbf{B}  \\
	\end{matrix} \right|=\det (\mathbf{A})\det (\mathbf{B}),
\end{equation}
which can also be extended to a multi-block diagonal matrix, so that we have
\begin{equation}\label{eq36}
	\begin{small}	
		\begin{aligned}
			& I(\mathbf{X};{{\mathbf{Y}}_{\text{com}}}|\mathbf{H})\\
			&=\!\!{{N}_{x}}{{\log }_{2}}\!\left[ {{(\sigma _{n}^{2})}^{-{{N}_{r}}{{N}_{c}}}}\prod\nolimits_{p=0}^{{{N}_{c}}-1}{\det ({{\mathbf{H}}^{H}}(p){{\Sigma }_{\mathbf{X}(p)}}\mathbf{H}(p)+\sigma _{n}^{2}{{\mathbf{I}}_{{{N}_{r}}}})} \right] \\ 
			& ={{N}_{x}}{{\log }_{2}}\left[ {{(\sigma _{n}^{2})}^{-{{N}_{r}}{{N}_{c}}}}\det ({{\mathbf{H}}^{H}}\text{diag}\{{{\Sigma }_{\mathbf{X}(p)}}\}\mathbf{H}+\sigma _{n}^{2}{{\mathbf{I}}_{{{N}_{r}}{{N}_{c}}}}) \right], \\ 
		\end{aligned}
	\end{small}
\end{equation}
where 
\begin{equation}
\mathbf{H}=\text{diag}\{\mathbf{H}(0),\cdots ,\mathbf{H}({{N}_{c}}-1)\}\in {{\mathbb{C}}^{{{N}_{t}}{{N}_{c}}\times {{N}_{r}}{{N}_{c}}}}
\end{equation}
is the transformation of $\mathbf H$ in (\ref{eqh}).

Since $\mathbf{X}=\text{diag}\{\mathbf{X}(0),\cdots ,\mathbf{X}({{N}_{c}}-1)\}\in {{\mathbb{C}}^{{{N}_{c}}{{N}_{x}}\times {{N}_{c}}{{N}_{t}}}}$ is a block diagonal matrix, the symbols of each subcarrier are distributed independently, such that $\text{diag}\{{{\Sigma }_{\mathbf{X}(p)}}\}={{\Sigma }_{\mathbf{X}}}$, where $E\{{{\mathbf{X}}^{H}}\mathbf{X}\}/{{N}_{x}}={{\Sigma }_{\mathbf{X}}}$. Finally, the communication MI is obtained as (\ref{eq26}).
\end{proof}

Note that different from the sensing MI, the correlation of subcarriers in communication MI is not considered, since the communication channel matrix can be transformed into block diagonal form. As the sensing channel needs to be sensed and the derivation is based on the known transmitted signals, the sensing MI can't be transformed into the sum of the MI at different subcarriers.


\section{MIMO-OFDM ISAC Waveform Design}
In this section, the waveform of the MIMO-OFDM ISAC system is optimized.
Since the channel is frequency-selective, the optimization according to the channel characteristics affects the sensing and communication MI. We propose an optimization scheme based on the weighted sum of communication and sensing MIs, and provide two special cases, including maximizing sensing MI and maximizing communication MI.

\subsection{Weighted Sum of Sensing and Communication }
In the MIMO-OFDM ISAC system, maximizing the performance of sensing and communication is essential for the ISAC waveform design. Since the MIs of communication and sensing are different in the orders of magnitude, in order to comprehensively optimize both sensing and communication performance, the weighted sum of the normalized sensing and communication MIs is applied in the optimization, as given by
\begin{equation}\label{eq54}
	\begin{aligned}
		& \underset{\mathbf{X}}{\mathop{\max }}\,\text{    }{{F}_{\omega }}=\frac{{{\omega }_{r}}}{{{F}_{r}}}I(\mathbf{G};{{\mathbf{Y}}_{\text{rad}}}|\mathbf{X})+\frac{{{\omega }_{c}}}{{{F}_{c}}}I(\mathbf{X};{{\mathbf{Y}}_{\text{com}}}|\mathbf{H}), \\ 
		& \text{ s}\text{.t}\text{.      tr}\left( \mathbf{X}{{\mathbf{X}}^{H}} \right)\le E, 
	\end{aligned}
\end{equation}
where ${{\omega }_{r}}$ and ${{\omega }_{c}}$ are the weighting coefficients of sensing and communication, respectively. ${{F}_{r}}$ and ${{F}_{c}}$ are the maximum sensing MI and communication MI, respectively. $E$ represents the average transmit power constraint.

Since ${{\Sigma }_{\mathbf{G}}}$ is a Hermitian matrix, we use the singular value decomposition (SVD) to decompose ${{\Sigma }_{\mathbf{G}}}$ as ${{\Sigma }_{\mathbf{G}}}={{\mathbf{U}}_{\mathbf{G}}}{{\mathbf{\Lambda} }_{\mathbf{G}}}\mathbf{U}_{\mathbf{G}}^{H}$, where ${{\mathbf{U}}_{\mathbf{G}}}$ is a unitary matrix, and ${{\mathbf{\Lambda} }_{\mathbf{G}}}$ is a diagonal matrix satisfying
\begin{equation}\label{eq38}
{{\mathbf{\Lambda} }_{\mathbf{G}}}=\text{diag}\{{{\lambda }_{11}},\cdots ,{{\lambda }_{{{N}_{t}}{{N}_{c}}}}\}.
\end{equation}
Note that ${{\Sigma }_{\mathbf{G}}}$ can also be described in the time-domain as
\begin{equation}\label{eq38n}
{{\Sigma }_{\mathbf{G}}}=\mathbf{\Omega}_r{{\Sigma }_{\mathbf{g}}}{\mathbf{\Omega}_r}^H,
\end{equation}
where ${{\Sigma }_{\mathbf{g}}}=E[{\mathbf{g}\mathbf{g}^H}]/N_r$, satisfying $\mathbf G=\mathbf \Omega_r \mathbf g$, and $\mathbf{\Omega}_r={{[{{\mathbf{\Omega}_r }^{T}}(0),{{\mathbf\Omega_r }^{T}}(1),\ldots ,{{\mathbf\Omega_r }^{T}}({{N}_{c}}-1)]}^{T}}\in {{\mathbb{C}}^{{{N}_{c}}{{N}_{t}}\times {{N}_{t}}(L+1)}}$, ${\mathbf\Omega_r}(p)={{\mathbf{I}}_{{{N}_{t}}}}\otimes \omega_r (p)$. 
(\ref{eq38n}) indicates that the channel correlation matrix comprises both spatial correlation and subcarrier correlation.

The SVD of the communication channel correlation matrix is different from that of the sensing channel correlation matrix because of the perfect CSI, so that we have $\mathbf{H}{{\mathbf{H}}^{H}}={{\mathbf{U}}_{\mathbf{H}}}{{\mathbf{\Lambda} }_{\mathbf{H}}}\mathbf{U}_{\mathbf{H}}^{H}$, where ${{\Lambda }_{\mathbf{H}}}$ is a diagonal matrix satisfying
\begin{equation}\label{eq47}
{{\mathbf{\Lambda} }_{\mathbf{H}}}=\text{diag}\{{{\mu }_{11}},\cdots ,{{\mu }_{{{N}_{t}}{{N}_{c}}}}\}.
\end{equation}
We further extend the expression of $\mathbf H {\mathbf H}^H$ to the time domain, as given by
\begin{equation}\label{eq47n}
{\mathbf H {\mathbf H}^H}=\text{diag}\left\lbrace\mathbf\Omega_c(p){{\Sigma }_{\mathbf{h}}} \mathbf\Omega_c^H(p)\right\rbrace ,
\end{equation}
where ${{\Sigma }_{\mathbf{h}}}={\mathbf{h}\mathbf{h}^H}$, satisfying $\mathbf H(p)=\mathbf \Omega_c(p) \mathbf h$ with $\mathbf \Omega_c(p)={{I}_{{{N}_{t}}}}\otimes\omega_c (p)$ and $\omega_c (p)=\left[ 1,{{e}^{-j2\pi p/{{N}_{c}}}},\right.$
$\left.\ldots ,{{e}^{-j2\pi pL_c/{{N}_{c}}}} \right]^T$.

Given that power allocation is a fundamental aspect of waveform design, the optimal waveform in this paper refers to the optimal power allocation \cite{zhou}\cite{caox}. Through the above derivation, the waveform optimization in this section is transformed into the power allocation with respect to the eigenvalues of the signal correlation matrix \cite{dong}. The optimal ISAC waveform design maximizing the weighted sum of communication and sensing MIs is revealed in Theorem \ref{theo-2}.


\begin{theorem}\label{theo-2}
    Define $\mathbf{\Xi }=\mathbf{U}_{\mathbf{G}}^{H}{{\mathbf{X}}^{H}}\mathbf{X}{{\mathbf{U}}_{\mathbf{G}}}=\mathbf{U}_{\mathbf{H}}^{H}{{\Sigma }_{\mathbf{X}}}{{\mathbf{U}}_{\mathbf{H}}}\buildrel \Delta \over = \left[ {{\xi_{ij}}} \right]$, the optimization problem of the MIMO-OFDM ISAC waveform design in (\ref{eq54}) can be transformed into
\begin{equation}\label{eq55}
	\begin{aligned}
		&\underset{\Xi }{\mathop{\max }}\,&{{F}_{\omega }}(\mathbf{\Xi})&=\sum\limits_{i=1}^{{{N}_{t}}{{N}_{c}}}\left\{ \frac{{{\omega }_{r}}}{{{F}_{r}}}{{N}_{r}}{{\log }_{2}}({{\lambda }_{ii}}{{\xi }_{ii}}/\sigma _{n}^{2}+1)\right.\\
		& & &\left.+\frac{1-{{\omega }_{r}}}{{{F}_{c}}}{{N}_{x}}{{\log }_{2}}({{\mu }_{ii}}{{\xi }_{ii}}/\sigma _{n}^{2}+1) \right\}, \\ 
		& \text{ s}\text{.t}\text{.}&\text{tr}(\mathbf{\Xi })&\le E,\text{ }{{\xi }_{ii}}\ge 0,\text{ }1\le i\le {{N}_{c}}{{N}_{t}}. 
	\end{aligned}
\end{equation}
The optimal allocation scheme is 
\begin{equation}\label{eq56}
		{{\xi }_{ii}}\!\!=\!\!\left\{ \begin{aligned}
		& \frac{1}{2}\!\left[\!\frac{1}{\zeta }(\varepsilon +\eta )\!-\!(\frac{1}{{{\nu }_{i}}}\!+\!\frac{1}{{{\varphi }_{i}}})+\right.\\
		&\left.\;\;\sqrt{{{[(\frac{1}{{{\nu }_{i}}}\!-\!\frac{1}{{{\varphi }_{i}}})\!+\!\frac{1}{\gamma }(\eta \!-\!\varepsilon )]}^{2}}\!+\!4\frac{\varepsilon \eta }{{{\gamma }^{2}}}} \right]^{+}\!\!\!,
		&{{\nu }_{i}}\ne 0, {{\varphi }_{i}}\ne 0 \\ 
		&0,&{{\nu }_{i}}=0, {{\varphi }_{i}}=0 \\ 
		\end{aligned} \right.,
		\end{equation}
		\begin{equation}\label{eq57}
		\sum\limits_{i=1}^{{{N}_{t}}{{N}_{c}}}{{{\xi }_{ii}}=E},
\end{equation}

where ${{v}_{i}}=\frac{{{\lambda }_{ii}}}{\sigma _{n}^{2}}$, ${{\varphi }_{i}}=\frac{{{\mu }_{ii}}}{\sigma _{n}^{2}}$, $\varepsilon =\frac{{{\omega }_{r}}{{N}_{r}}}{{{F}_{r}}\ln 2}$, $\eta =\frac{(1-{{\omega }_{r}}){{N}_{x}}}{{{F}_{c}}\ln 2}$.
\end{theorem}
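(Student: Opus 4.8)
The plan is to collapse the matrix program in (\ref{eq54}) into the separable scalar problem (\ref{eq55}) and then to solve the latter in closed form through the Karush--Kuhn--Tucker (KKT) conditions. The first move is to rewrite both mutual informations so that the transmit covariance enters as a single $N_tN_c\times N_tN_c$ object. Applying the Sylvester (Weinstein--Aronszajn) identity $\det(\mathbf I+\mathbf A\mathbf B)=\det(\mathbf I+\mathbf B\mathbf A)$ to the sensing expression (\ref{eq19}) moves $\mathbf X$ to the right of $\Sigma_{\mathbf G}$, giving
\[
I(\mathbf G;\mathbf Y_{\mathrm{rad}}|\mathbf X)=N_r\log_2\det\bigl(\mathbf I_{N_tN_c}+\tfrac{1}{\sigma_n^2}\mathbf X^H\mathbf X\,\Sigma_{\mathbf G}\bigr),
\]
and the same identity turns the communication expression (\ref{eq26}) into $I(\mathbf X;\mathbf Y_{\mathrm{com}}|\mathbf H)=N_x\log_2\det(\mathbf I_{N_tN_c}+\tfrac{1}{\sigma_n^2}\Sigma_{\mathbf X}\,\mathbf H\mathbf H^H)$. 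Substituting the eigendecompositions $\Sigma_{\mathbf G}=\mathbf U_{\mathbf G}\Lambda_{\mathbf G}\mathbf U_{\mathbf G}^H$ from (\ref{eq38}) and $\mathbf H\mathbf H^H=\mathbf U_{\mathbf H}\Lambda_{\mathbf H}\mathbf U_{\mathbf H}^H$ from (\ref{eq47}) and conjugating each determinant by its unitary factor, both informations become functions of the transmit covariance only through the Hermitian matrix $\mathbf\Xi$ defined in the statement, with the budget preserved since $\mathrm{tr}(\mathbf X\mathbf X^H)=\mathrm{tr}(\mathbf\Xi)$ is unitarily invariant.

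The central step is to argue that the maximizing $\mathbf\Xi$ is diagonal. Writing $\det(\mathbf I+\tfrac{1}{\sigma_n^2}\mathbf\Xi\Lambda)=\det(\mathbf I+\tfrac{1}{\sigma_n^2}\Lambda^{1/2}\mathbf\Xi\Lambda^{1/2})$ with $\Lambda^{1/2}\mathbf\Xi\Lambda^{1/2}$ Hermitian positive definite, Hadamard's inequality bounds each determinant by the product of its diagonal entries $\prod_i(1+\lambda_{ii}\xi_{ii}/\sigma_n^2)$, with equality exactly when $\mathbf\Xi$ is diagonal. Replacing any feasible $\mathbf\Xi$ by its diagonal part leaves $\mathrm{tr}(\mathbf\Xi)$ and every $\xi_{ii}$ unchanged while weakly increasing the determinant, so the optimum is attained at a diagonal $\mathbf\Xi$ and $F_\omega$ reduces to the separable sum (\ref{eq55}). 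I expect this to be the main obstacle, and it is the point requiring care: the single matrix $\mathbf\Xi$ simultaneously diagonalizes the sensing determinant (via $\mathbf U_{\mathbf G}$) and the communication determinant (via $\mathbf U_{\mathbf H}$) only when the two correlation matrices share an eigenbasis, i.e. $\mathbf U_{\mathbf G}=\mathbf U_{\mathbf H}$, which is precisely what the common definition of $\mathbf\Xi$ encodes. Without this alignment the two Hadamard bounds cannot be met by the same $\mathbf X$ and the objective would not decouple, so the step should be stated as a standing assumption or justified from the shared $\mathbf I_{N_t}\otimes(\cdot)$ structure of (\ref{eq38n}) and (\ref{eq47n}); one must also identify $\mathbf X^H\mathbf X$ with $\Sigma_{\mathbf X}$ up to the normalization absorbed into $E$.

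It then remains to solve (\ref{eq55}). Each summand $f_i(\xi)=\varepsilon\ln(\nu_i\xi+1)+\eta\ln(\varphi_i\xi+1)$ is concave in $\xi\ge 0$ for $\omega_r\in[0,1]$ and the feasible set $\{\xi_{ii}\ge0,\ \sum_i\xi_{ii}\le E\}$ is convex, so the KKT conditions are necessary and sufficient. Forming the Lagrangian with a single multiplier $\zeta$ for the power budget and setting the stationarity condition $\partial f_i/\partial\xi_{ii}=\zeta$ gives $\tfrac{\varepsilon\nu_i}{\nu_i\xi+1}+\tfrac{\eta\varphi_i}{\varphi_i\xi+1}=\zeta$, and clearing denominators yields the per-coordinate quadratic
\[
\zeta\nu_i\varphi_i\,\xi^2+\bigl[\zeta(\nu_i+\varphi_i)-\nu_i\varphi_i(\varepsilon+\eta)\bigr]\xi+\bigl[\zeta-\varepsilon\nu_i-\eta\varphi_i\bigr]=0.
\]
Dividing by $\zeta\nu_i\varphi_i$ and taking the larger, non-negative root reproduces the bracketed expression in (\ref{eq56}) with $\gamma$ playing the role of $\zeta$; since $f_i'$ is strictly decreasing there is a unique crossing, the operator $[\cdot]^+$ enforces complementary slackness for directions that receive no power, and the degenerate case $\nu_i=\varphi_i=0$ forces $\xi_{ii}=0$. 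Finally, monotonicity of $\sum_i\xi_{ii}(\zeta)$ in $\zeta$ lets me tune the single multiplier so that the budget (\ref{eq57}) holds with equality, which closes the argument.
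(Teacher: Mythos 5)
Your proposal follows essentially the same route as the paper's own proof: Sylvester's determinant identity to express both MIs through $\mathbf{\Xi}$, Hadamard's inequality to reduce the objective to the diagonal entries, and KKT stationarity yielding the per-index quadratic whose nonnegative root is (\ref{eq56}), with the multiplier tuned so that (\ref{eq57}) holds. You are in fact more explicit than the paper on one point it leaves implicit, namely that the common definition $\mathbf{\Xi}=\mathbf{U}_{\mathbf{G}}^{H}\mathbf{X}^{H}\mathbf{X}\mathbf{U}_{\mathbf{G}}=\mathbf{U}_{\mathbf{H}}^{H}\Sigma_{\mathbf{X}}\mathbf{U}_{\mathbf{H}}$ tacitly requires the two eigenbases to be alignable and $\mathbf{X}^{H}\mathbf{X}$ to be identified with $\Sigma_{\mathbf{X}}$ up to normalization.
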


\begin{proof}
	According to the Sylvester's determinant equation $\det (\mathbf{AB}+\sigma _{n}^{2}{{\mathbf{I}}_{N}})={{(\sigma _{n}^{2})}^{n-m}}\det (\mathbf{BA}+\sigma _{n}^{2}{{\mathbf{I}}_{M}})$  \cite{b13}, (\ref{eq25}) and (\ref{eq38}) can be transformed into
	\begin{equation}\label{eq42}
	I(\mathbf{G};{{\mathbf{Y}}_{\text{rad}}}|\mathbf{X})={{N}_{r}}{{\log }_{2}}\left[ {{(\sigma _{n}^{2})}^{-{{N}_{t}}{{N}_{c}}}}\det ({{\mathbf{\Lambda} }_{\mathbf{G}}}\mathbf{\Xi}+\sigma _{n}^{2}{{\mathbf{I}}_{{{N}_{t}}{{N}_{c}}}}) \right]\!.
	\end{equation}
	\begin{equation}\label{eq51}
	I(\mathbf{X};{{\mathbf{Y}}_{\text{com}}}|\mathbf{H})={{N}_{x}}{{\log }_{2}}\!\left[ {{(\sigma _{n}^{2})}^{-{{N}_{t}}{{N}_{c}}}}\det (\mathbf{\Xi}{{\mathbf{\Lambda} }_{\mathbf{H}}}+\sigma _{n}^{2}{{\mathbf{I}}_{{{N}_{t}}{{N}_{c}}}}) \right]\!\!.
	\end{equation}	 
	
	According to the Hadamard's inequality, $\det (\mathbf{\Xi })\le \prod\nolimits_{i=1}^{{{N}_{t}}{{N}_{c}}}{{{\xi }_{ii}}}$. The rest part is bounded by 
	\begin{equation}
	\det ({{\mathbf{\Lambda} }_{\mathbf{G}}}+\sigma _{n}^{2}{{\mathbf{\Xi}}^{-1}})\le \prod\nolimits_{i=1}^{{{N}_{t}}{{N}_{c}}}{({{\lambda }_{ii}}+\sigma _{n}^{2}/{{\xi}_{ii}})}. 
	\end{equation}
	\begin{equation}
	\det ({{\mathbf{\Lambda} }_{\mathbf{H}}}+\sigma _{n}^{2}{{\mathbf{\Xi}}^{-1}})\le \prod\nolimits_{i=1}^{{{N}_{t}}{{N}_{c}}}{({{\mu }_{ii}}+\sigma _{n}^{2}/{{\xi}_{ii}})}. 
	\end{equation}

As ${{\mathbf{U}}_{\mathbf{G}}}$ and ${{\mathbf{U}}_{\mathbf{H}}}$ are unitary matrices, we have
\begin{equation}
\begin{aligned}
\text{tr}(\mathbf{\Xi})  = \text{tr}(\mathbf{U}_{\mathbf{H}}^{H}{{\Sigma }_{\mathbf{X}}}{{\mathbf{U}}_{\mathbf{H}}}) 
=\text{tr}(\mathbf{U}_{\mathbf{G}}^{H}{{\mathbf{X}}^{H}}\mathbf{X}{{\mathbf{U}}_{\mathbf{G}}})=\text{tr}({{\Sigma }_{\mathbf{X}}})\le E.
\end{aligned}
\end{equation}

Through the above derivation, the optimization problem can be simplified to (\ref{eq55}). Obviously, the objective is the weighted sum of two concave functions, and the concave function maximization problem can be converted into a convex function minimization problem. Thus, the problem is resolved by using the KKT condition \cite{boyd}, which achieves the maximum weighted MI.

The Lagrange function is  
\begin{equation}\label{eq59}
	\begin{aligned}
	L(\mathbf{\Xi })&=-\sum\limits_{i=1}^{{{N}_{t}}{{N}_{c}}}\left\{ \frac{{{\omega }_{r}}}{{{F}_{r}}}{{N}_{r}}{{\log }_{2}}({{\lambda }_{ii}}{{\xi }_{ii}}/\sigma _{n}^{2}+1)\right.\\
	&\left.+\frac{1-{{\omega }_{r}}}{{{F}_{c}}}{{N}_{x}}{{\log }_{2}}({{\mu }_{ii}}{{\xi }_{ii}}/\sigma _{n}^{2}+1) \right\} \\ 
	& +\gamma (\sum\limits_{i=1}^{{{N}_{t}}{{N}_{c}}}{{{\xi }_{ii}}}-E)+\sum\limits_{i=1}^{{{N}_{t}}{{N}_{c}}}{{{\gamma }_{i}}(-{{\xi }_{ii}})}. \\ 
	\end{aligned}
\end{equation}

Setting the partial derivative ${{\nabla }_{{{\xi }_{ii}}}}L(\mathbf{\Xi })$ of ${{\xi }_{ii}}$ to zero, and supplementing the relaxation complementarity condition, the inequality constraint and the KKT multiplier constraint, we have
\begin{align}
	\gamma -{{\gamma }_{i}}&=\frac{{{\omega }_{r}}{{N}_{r}}}{{{F}_{r}}\ln 2}\frac{{{\lambda }_{ii}}}{\sigma _{n}^{2}+{{\lambda }_{ii}}{{\xi }_{ii}}}+\frac{(1-{{\omega }_{r}}){{N}_{x}}}{{{F}_{c}}\ln 2}\frac{{{\mu }_{ii}}}{\sigma _{n}^{2}+{{\mu }_{ii}}{{\xi }_{ii}}}\notag \\
	&=\varepsilon \frac{{{\nu }_{i}}}{1+{{\nu }_{i}}{{\xi }_{ii}}}+\eta \frac{{{\varphi }_{i}}}{1+{{\varphi }_{i}}{{\xi }_{ii}}},\label{eq60}
\end{align}
\begin{equation}\label{eq61}
	\gamma \left( \sum\limits_{i=1}^{{{N}_{t}}{{N}_{c}}}{{{\xi }_{ii}}-E} \right)=0,
\end{equation}
\begin{equation}\label{eq62}
	{{\gamma }_{i}}{{\xi }_{ii}}=0,
\end{equation}
\begin{equation}\label{eq63}
	\gamma \ge 0,{{\gamma }_{i}}\ge 0,1\le i\le {{N}_{t}}{{N}_{c}}.
\end{equation}

Solving (\ref{eq60})--(\ref{eq62}), if ${{\xi }_{ii}}\ne 0$, ${{\gamma }_{i}}$ is zero and the optimal solution can be obtained as (\ref{eq56}) and (\ref{eq57}).
\end{proof}

Through the binary search method, the optimal $\gamma $ can be searched in $0<1/\gamma <1/\min \{\varepsilon /(1/{{\nu }_{i}}+E)+\eta /(1/{{\varphi }_{i}}+E)\}$, so that the optimal solution can be found. 

Finally, the sensing MI and communication MI are obtained as
\begin{equation}\label{eq64}
	I(\mathbf{G};{{\mathbf{Y}}_{\text{rad}}}|\mathbf{X})\!=\!{{N}_{r}}{{\log }_{2}}\left[ {{(\sigma _{n}^{2})}^{-{{N}_{t}}{{N}_{c}}}}\det ({{\Lambda }_{\mathbf{G}}}\mathbf{\Xi }+\sigma _{n}^{2}{{\mathbf{I}}_{{{N}_{t}}{{N}_{c}}}}) \right],
\end{equation}
\begin{equation}\label{eq65}
	I(\mathbf{X};{{\mathbf{Y}}_{\text{com}}}|\mathbf{H})\!=\!{{N}_{x}}{{\log }_{2}}\left[ {{(\sigma _{n}^{2})}^{-{{N}_{t}}{{N}_{c}}}}\!\det (\mathbf{\Xi }{{\Lambda }_{\mathbf{H}}}+\sigma _{n}^{2}{{\mathbf{I}}_{{{N}_{t}}{{N}_{c}}}}) \right]\!.\!
\end{equation}

Therefore, the weighted MI is also obtained as
\begin{equation}\label{eq66}
	{F_\omega}\!\!=\!\!\sum\limits_{i=1}^{{{N}_{t}}{{N}_{c}}}\!\!{\left\{\! \frac{{{N}_{r}}{{\log }_{2}}({{\lambda }_{ii}}{{\xi }_{ii}}/\sigma _{n}^{2}+1)}{{{F}_{r}}}\!\!+\!\!\frac{{{N}_{x}}{{\log }_{2}}({{\mu }_{ii}}{{\xi }_{ii}}/\sigma _{n}^{2}+1)}{{{F}_{c}}} \!\right\}}.
\end{equation}

\begin{remark}
If the channel correlation between different subcarriers is not considered, similar to (\ref{eq64}), the sensing MI is obtained as
\begin{equation}\label{eq67}
I(\mathbf{G'};{{\mathbf{Y}}_{\text{rad}}}|\mathbf{X})\!\!=\!\!{{N}_{r}}{{\log }_{2}}\left[ \!{{(\sigma _{n}^{2})}^{-{{N}_{t}}{{N}_{c}}}}\det ({{\Lambda }_{\mathbf{G'}}}\mathbf{\Xi }+\sigma _{n}^{2}{{\mathbf{I}}_{{{N}_{t}}{{N}_{c}}}}) \!\right]\!,
\end{equation}
and the communication MI is (\ref{eq65}). The weighted MI is
\begin{equation}\label{eq69}
{F'_\omega}\!\!=\!\!\sum\limits_{i=1}^{{{N}_{t}}{{N}_{c}}}\!\!{\left\{\! \frac{{{N}_{r}}{{\log }_{2}}({{\lambda }'_{ii}}{{\xi }_{ii}'}/\sigma _{n}^{2}+1)}{{{F}_{r}}}\!\!+\!\!\frac{{{N}_{x}}{{\log }_{2}}({{\mu }_{ii}}{{\xi }_{ii}'}/\sigma _{n}^{2}+1)}{{{F}_{c}}} \!\right\}}.
\end{equation}
Due to the channel independence between different subcarriers, we have $I(\mathbf{G'};{{\mathbf{Y}}_{\text{rad}}}|\mathbf{X})>I(\mathbf{G};{{\mathbf{Y}}_{\text{rad}}}|\mathbf{X})$. Thus, ${F'_\omega}>{F_\omega}$. Although the results show a bigger upper bound without considering the channel correlation between different subcarriers, it is not practical to ignore the channel correlation, which may lose the characteristics of the channel, so that the optimal result is not obtained from the real channel and the performance of ISAC system is not the best.
\end{remark}

\subsection{Special Case 1: Maximizing sensing MI}
For sensing MI maximization, the transmitted signal should be adapted to the sensing channel, satisfying the average transmit power constraint. The optimization is formulated as
\begin{equation}\label{eq37}
\begin{aligned}
\begin{matrix}
\underset{\mathbf{X}}{\mathop{\max }}\,&I(\mathbf{G};{{\mathbf{Y}}_{\text{rad}}}|\mathbf{X}), \\ 
\text{  s}\text{.t}\text{.}&\text{tr}\left( \mathbf{X}{{\mathbf{X}}^{H}} \right)\le E. \\ 
\end{matrix}
\end{aligned}
\end{equation}

The ISAC waveform design maximizing the sensing MI is revealed in Corollary \ref{theo-3}.

\begin{corollary}\label{theo-3}
	Define matrix $\mathbf{Q}=\mathbf{U}_{\mathbf{G}}^{H}{{\mathbf{X}}^{H}}\mathbf{X}{{\mathbf{U}}_{\mathbf{G}}}\buildrel \Delta \over = \left[ {{q_{ij}}} \right]$. Then the optimization maximizing sensing MI can be transformed into
	\begin{equation}\label{eq39}
	\begin{aligned}
	&\underset{\mathbf{Q}}{\mathop{\max }}\,&{{F}_{r}}(\mathbf{Q})&={{N}_{r}}\sum\nolimits_{i=1}^{{{N}_{t}}{{N}_{c}}}{{{\log }_{2}}({{\lambda }_{ii}}{{q}_{ii}}/\sigma _{n}^{2}+1)}, \\ 
	&\text{s}\text{.t}\text{.}&\text{tr}(\mathbf{Q})&\le E,\text{ }{{q}_{ii}}\ge 0,\text{ }1\le i\le {{N}_{c}}{{N}_{t}}. \\ 
	\end{aligned}
	\end{equation}
	The optimal allocation scheme is
	\begin{equation}\label{eq40}
	{{q}_{ii}}=\left\{ \begin{aligned}
	& {{(-\frac{1}{\alpha \ln 2}-\frac{\sigma _{n}^{2}}{{{\lambda }_{ii}}})}^{+}}&{{\lambda }_{ii}}\ne 0 \\ 
	& 0&{{\lambda }_{ii}}=0 \\ 
	\end{aligned} \right.\text{ },\text{ }i=1,\cdots ,{{N}_{c}}{{N}_{t}},
	\end{equation}
	\begin{equation}\label{eq41}
	\sum\limits_{i=1}^{{{N}_{t}}{{N}_{c}}}{{{q}_{ii}}}-E=0,
	\end{equation}
	where ${{[x]}^{+}}=\max \{x,0\}$.
\end{corollary}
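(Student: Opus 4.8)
The plan is to treat this as the single-objective (water-filling) specialization of Theorem~\ref{theo-2} with $\omega_r=1$, and to derive it directly from the closed-form sensing MI in \eqref{eq19}. First I would diagonalize the determinant. Substituting the SVD $\Sigma_{\mathbf{G}}=\mathbf{U}_{\mathbf{G}}\mathbf{\Lambda}_{\mathbf{G}}\mathbf{U}_{\mathbf{G}}^{H}$ and invoking Sylvester's determinant identity exactly as in \eqref{eq42}, I would rewrite the sensing MI as
\[
I(\mathbf{G};\mathbf{Y}_{\text{rad}}|\mathbf{X})=N_r\log_2\!\left[(\sigma_n^2)^{-N_tN_c}\det(\mathbf{\Lambda}_{\mathbf{G}}\mathbf{Q}+\sigma_n^2\mathbf{I}_{N_tN_c})\right],
\]
where $\mathbf{Q}=\mathbf{U}_{\mathbf{G}}^{H}\mathbf{X}^{H}\mathbf{X}\mathbf{U}_{\mathbf{G}}$. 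The two facts that make this substitution clean are the unitary invariance $\det(\mathbf{U}_{\mathbf{G}}\mathbf{M}\mathbf{U}_{\mathbf{G}}^{H})=\det(\mathbf{M})$ and the observation that $\mathbf{Q}$ inherits positive semidefiniteness from $\mathbf{X}^{H}\mathbf{X}$ (hence $q_{ii}\ge 0$), while $\text{tr}(\mathbf{Q})=\text{tr}(\mathbf{X}^{H}\mathbf{X})=\text{tr}(\mathbf{X}\mathbf{X}^{H})\le E$ by the cyclic property of the trace; this already yields the constraints of \eqref{eq39}.

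Next I would show that the maximizing $\mathbf{Q}$ may be taken diagonal. Using the diagonal square root $\mathbf{\Lambda}_{\mathbf{G}}^{1/2}$, the Hermitian positive-definite matrix $\mathbf{\Lambda}_{\mathbf{G}}^{1/2}\mathbf{Q}\mathbf{\Lambda}_{\mathbf{G}}^{1/2}+\sigma_n^2\mathbf{I}$ shares its determinant with $\mathbf{\Lambda}_{\mathbf{G}}\mathbf{Q}+\sigma_n^2\mathbf{I}$, and Hadamard's inequality bounds that determinant by $\prod_i(\lambda_{ii}q_{ii}+\sigma_n^2)$, with equality when the off-diagonal entries vanish. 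Since any diagonal $\mathbf{Q}\succeq\mathbf{0}$ with a prescribed diagonal is realizable through $\mathbf{X}^{H}\mathbf{X}=\mathbf{U}_{\mathbf{G}}\mathbf{Q}\mathbf{U}_{\mathbf{G}}^{H}$, the bound is attained, so
\[
I(\mathbf{G};\mathbf{Y}_{\text{rad}}|\mathbf{X})\le N_r\sum_{i=1}^{N_tN_c}\log_2(\lambda_{ii}q_{ii}/\sigma_n^2+1),
\]
which is exactly the objective of \eqref{eq39} and collapses the matrix optimization to a scalar power-allocation program in the $q_{ii}$.

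Finally I would solve \eqref{eq39} by the KKT conditions. The objective is a sum of concave functions of $q_{ii}$ over a convex set, so KKT is both necessary and sufficient. Introducing a multiplier $\alpha$ for the power budget and multipliers $\beta_i$ for the nonnegativity constraints, stationarity together with complementary slackness yields the water-filling solution \eqref{eq40}, with the common water level fixed so that the power constraint binds, i.e.\ \eqref{eq41}. The directions with $\lambda_{ii}=0$ contribute nothing to the objective, which forces $q_{ii}=0$ there and recovers the second branch of \eqref{eq40}.

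I expect the Hadamard step, not the calculus, to be the main obstacle. The delicate point is to justify that restricting to diagonal $\mathbf{Q}$ loses no optimality---this rests on the equality condition of Hadamard's inequality and on the realizability of the optimal diagonal $\mathbf{Q}$ by a valid transmit matrix $\mathbf{X}$---and to handle the degenerate eigen-directions $\lambda_{ii}=0$ separately, where the objective is insensitive to $q_{ii}$ so that any allocated power is wasted.
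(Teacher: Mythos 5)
Your proposal is correct and follows essentially the same route as the paper: Sylvester's determinant identity plus the SVD of $\Sigma_{\mathbf{G}}$ to reduce the MI to $\det(\mathbf{\Lambda}_{\mathbf{G}}\mathbf{Q}+\sigma_n^2\mathbf{I})$, Hadamard's inequality to justify restricting to diagonal $\mathbf{Q}$, and a Lagrangian/KKT argument yielding the water-filling solution. The only difference is one of completeness: the paper's stated proof of this corollary contains just the Lagrangian step (borrowing the Sylvester and Hadamard machinery from the proof of Theorem~\ref{theo-2} and deferring realizability of the optimal $\mathbf{Q}$ via $\mathbf{X}=\mathbf{\Phi}\mathbf{Q}^{1/2}\mathbf{U}_{\mathbf{G}}^{H}$ to the discussion after the corollary), whereas you make these steps explicit, and your symmetrized form $\mathbf{\Lambda}_{\mathbf{G}}^{1/2}\mathbf{Q}\mathbf{\Lambda}_{\mathbf{G}}^{1/2}$ avoids the paper's implicit assumption that $\mathbf{Q}^{-1}$ exists.
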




\begin{proof}	
	To solve (\ref{eq39}), the Lagrange multiplier method is applied. The Lagrange function is
	\begin{equation}\label{eq44}
	L(\mathbf{Q})=\sum\limits_{i=1}^{{{N}_{t}}{{N}_{c}}}{{{\log }_{2}}({{\lambda }_{ii}}{{q}_{ii}}/\sigma _{n}^{2}+1)+}\alpha (\sum\limits_{i=1}^{{{N}_{t}}{{N}_{c}}}{{{q}_{ii}}}-E),	
	\end{equation}
	which transforms the problem (\ref{eq39}) into a Lagrange function extremum problem. Here, $\alpha $ is related to ${{q}_{ii}}$. By setting the partial derivatives ${{\nabla }_{{{q}_{ii}}}}L(\mathbf{Q})$ of ${{q}_{ii}}$ and the partial derivative ${{\nabla }_{\alpha }}L(\mathbf{Q})$ of $\alpha $ to zero, the final closed-form solutions are obtained, i.e., (\ref{eq40}) and (\ref{eq41}). 
\end{proof}

The diagonal elements of $\mathbf{Q}$ are non-negative real numbers, so that ${{\mathbf{Q}}^{1/2}}$ exists. Meanwhile, we define $\mathbf{Z=X}{{\mathbf{U}}_{\mathbf{G}}}$, which satisfies $\text{tr}(\mathbf{Q})=\text{tr}(\mathbf{U}_{\mathbf{G}}^{H}{{\mathbf{X}}^{H}}\mathbf{X}{{\mathbf{U}}_{\mathbf{G}}})=\operatorname{tr}({{\mathbf{Z}}^{H}}\mathbf{Z})$. For any matrix $\mathbf{\Phi }(p)\in {{\mathbb{C}}^{{{N}_{x}}\times {{N}_{t}}}}$ with orthogonal columns, which satisfies $\mathbf{\Phi }=\text{diag}\{\mathbf{\Phi }(p)\}$, we have $\mathbf{Z=\Phi }{{\mathbf{Q}}^{1/2}}$. Through the above derivation, the transmitted symbol can be obtained as $\mathbf{X}=\mathbf{\Phi }{{\mathbf{Q}}^{1/2}}\mathbf{U}_{\mathbf{G}}^{H}$. The optimal transmitted signal obtained by maximizing the sensing MI is substituted into (\ref{eq26}), then the communication MI is 
\begin{equation}\label{eq45}
\begin{aligned}
&I(\mathbf{X};{{\mathbf{Y}}_{\text{com}}}|\mathbf{H})\\
&={{N}_{x}}{{\log }_{2}}\left[ {{(\sigma _{n}^{2})}^{-{{N}_{r}}{{N}_{c}}}}\det ({{\mathbf{H}}^{H}}{{\mathbf{U}}_{\mathbf{G}}}\mathbf{QU}_{\mathbf{G}}^{H}\mathbf{H}+\sigma _{n}^{2}{{\mathbf{I}}_{{{N}_{r}}{{N}_{c}}}}) \right].  
\end{aligned}
\end{equation}

The matrix in (\ref{eq45}) cannot be transformed into a diagonal form, so that (\ref{eq45}) is not the maximum communication MI.

\subsection{Special Case 2: Maximizing Communication MI}

For communication MI maximization, the optimization is formulated as
\begin{equation}\label{eq46}
\begin{matrix}
\underset{\mathbf{X}}{\mathop{\max }}\, & I(\mathbf{X};{{\mathbf{Y}}_{\text{com}}}|\mathbf{H}),  \\
\text{s}\text{.t}\text{.} & \text{tr}\left( \mathbf{X}{{\mathbf{X}}^{H}} \right)\le E.  \\
\end{matrix}
\end{equation}

The ISAC waveform design maximizing the communication MI is revealed in Corollary \ref{theo-4}.

\begin{corollary}\label{theo-4}
	By defining $\mathbf{S}=\mathbf{U}_{\mathbf{H}}^{H}{{\Sigma }_{\mathbf{X}}}{{\mathbf{U}}_{\mathbf{H}}}\buildrel \Delta \over = \left[ {{s_{ij}}} \right]
	$, the optimization maximizing communication MI can be transformed into
	\begin{equation}\label{eq48}
	\begin{aligned}
	& \underset{\mathbf{S}}{\mathop{\max }}\,&{{F}_{r}}(\mathbf{S})&={{N}_{x}}\sum\nolimits_{i=1}^{{{N}_{t}}{{N}_{c}}}{{{\log }_{2}}({{\mu }_{ii}}{{s}_{ii}}/\sigma _{n}^{2}+1)}, \\ 
	& \text{ s}\text{.t}\text{.}&\text{tr}(\mathbf{S})&\le E,\text{ }{{s}_{ii}}\ge 0,\text{ }1\le i\le {{N}_{c}}{{N}_{t}}. \\ 
	\end{aligned}
	\end{equation}
	The optimal allocation scheme is
	\begin{equation}\label{eq49}
	{{s}_{ii}}=\left\{ \begin{aligned}
	& {{(-\frac{1}{\beta \ln 2}-\frac{\sigma _{n}^{2}}{{{\mu }_{ii}}})}^{+}}&{{\mu }_{ii}}\ne 0 \\ 
	& 0&{{\mu }_{ii}}=0 \\ 
	\end{aligned} \right.\text{ },\text{ }i=1,\cdots ,{{N}_{c}}{{N}_{t}},
	\end{equation}
	\begin{equation}\label{eq50}
	\sum\limits_{i=1}^{{{N}_{t}}{{N}_{c}}}{{{s}_{ii}}}-E=0,
	\end{equation}	
	\textnormal{where} ${{[x]}^{+}}=\max \{x,0\}$.  
\end{corollary}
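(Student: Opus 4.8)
The plan is to exploit the structural symmetry with Corollary~\ref{theo-3}: the sensing MI is quadratic in $\mathbf{X}$ through $\mathbf{X}\Sigma_{\mathbf{G}}\mathbf{X}^{H}$, while the communication MI is quadratic in $\Sigma_{\mathbf{X}}$ through $\mathbf{H}^{H}\Sigma_{\mathbf{X}}\mathbf{H}$, so after the same two algebraic reductions the two problems become identical scalar programs under the substitution $\mathbf{G}\!\to\!\mathbf{H}$, $\lambda_{ii}\!\to\!\mu_{ii}$, $N_r\!\to\!N_x$. First I would start from the closed-form communication MI in (\ref{eq26}) and apply Sylvester's determinant identity with $\mathbf{A}=\mathbf{H}^{H}\in\mathbb{C}^{N_rN_c\times N_tN_c}$ and $\mathbf{B}=\Sigma_{\mathbf{X}}\mathbf{H}$ to swap the inner determinant from dimension $N_rN_c$ to $N_tN_c$, turning $\det(\mathbf{H}^{H}\Sigma_{\mathbf{X}}\mathbf{H}+\sigma_n^2\mathbf{I}_{N_rN_c})$ into $(\sigma_n^2)^{(N_r-N_t)N_c}\det(\Sigma_{\mathbf{X}}\mathbf{H}\mathbf{H}^{H}+\sigma_n^2\mathbf{I}_{N_tN_c})$; the prefactor then collapses to $(\sigma_n^2)^{-N_tN_c}$, recovering exactly (\ref{eq65}). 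Substituting the SVD $\mathbf{H}\mathbf{H}^{H}=\mathbf{U}_{\mathbf{H}}\Lambda_{\mathbf{H}}\mathbf{U}_{\mathbf{H}}^{H}$ and using unitary invariance of the determinant replaces $\Sigma_{\mathbf{X}}$ by $\mathbf{S}=\mathbf{U}_{\mathbf{H}}^{H}\Sigma_{\mathbf{X}}\mathbf{U}_{\mathbf{H}}$, so the objective becomes $N_x\log_2[(\sigma_n^2)^{-N_tN_c}\det(\mathbf{S}\Lambda_{\mathbf{H}}+\sigma_n^2\mathbf{I}_{N_tN_c})]$.

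The crux — the reduction to the separable form (\ref{eq48}) — is showing the optimal $\mathbf{S}$ is diagonal. I would use $\det(\mathbf{S}\Lambda_{\mathbf{H}}+\sigma_n^2\mathbf{I}_{N_tN_c})=\det(\Lambda_{\mathbf{H}}^{1/2}\mathbf{S}\Lambda_{\mathbf{H}}^{1/2}+\sigma_n^2\mathbf{I}_{N_tN_c})$, which holds since $\det(\mathbf{AB}+\sigma_n^2\mathbf{I})=\det(\mathbf{BA}+\sigma_n^2\mathbf{I})$ with $\mathbf{A}=\mathbf{S}\Lambda_{\mathbf{H}}^{1/2}$, $\mathbf{B}=\Lambda_{\mathbf{H}}^{1/2}$. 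The right-hand matrix is Hermitian positive definite with $i$-th diagonal entry $\mu_{ii}s_{ii}+\sigma_n^2$, so Hadamard's inequality gives $\det\le\prod_{i}(\mu_{ii}s_{ii}+\sigma_n^2)$, with equality if and only if it is diagonal, i.e. if and only if $\mathbf{S}$ is diagonal on the support of $\Lambda_{\mathbf{H}}$. This is precisely the bounding mechanism already invoked in the proof of Theorem~\ref{theo-2}. Because $\mathbf{U}_{\mathbf{H}}$ is unitary, the power budget transfers as $\mathrm{tr}(\mathbf{S})=\mathrm{tr}(\Sigma_{\mathbf{X}})\le E$, and every nonnegative diagonal $\mathbf{S}$ is realized by a valid covariance $\Sigma_{\mathbf{X}}=\mathbf{U}_{\mathbf{H}}\mathbf{S}\mathbf{U}_{\mathbf{H}}^{H}$, so the diagonal maximizer is attainable and the bound is tight. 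This establishes the concave program (\ref{eq48}).

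Finally I would solve (\ref{eq48}) by the same Lagrange/KKT route as Corollary~\ref{theo-3}: form $L(\mathbf{S})=\sum_i\log_2(\mu_{ii}s_{ii}/\sigma_n^2+1)+\beta(\sum_i s_{ii}-E)$, set $\nabla_{s_{ii}}L=0$, solve for $s_{ii}$, and project onto the nonnegative orthant to obtain the water-filling solution (\ref{eq49}) together with the active power constraint (\ref{eq50}), with $\beta$ pinned down by (\ref{eq50}). The main obstacle is not the water-filling but the diagonalization step: one must symmetrize to $\Lambda_{\mathbf{H}}^{1/2}\mathbf{S}\Lambda_{\mathbf{H}}^{1/2}+\sigma_n^2\mathbf{I}$ before applying Hadamard (since $\mathbf{S}\Lambda_{\mathbf{H}}$ is not itself Hermitian) and verify that the equality condition genuinely forces $\mathbf{S}$ to be diagonal, which is exactly what makes the scalar reduction an equality rather than a mere upper bound; the zero eigenvalues of $\Lambda_{\mathbf{H}}$ must be handled separately, but they contribute no MI and are assigned $s_{ii}=0$, consistent with the second branch of (\ref{eq49}).
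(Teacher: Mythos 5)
Your proposal is correct and follows essentially the same route as the paper: Sylvester's determinant identity and the SVD of $\mathbf{H}\mathbf{H}^{H}$ reduce the objective to $\det(\mathbf{S}{{\mathbf{\Lambda}}_{\mathbf{H}}}+\sigma_n^2\mathbf{I})$, Hadamard's inequality forces the optimal $\mathbf{S}$ to be diagonal, and the Lagrange/KKT water-filling argument yields (\ref{eq49})--(\ref{eq50}); the paper's own proof of this corollary only writes down the Lagrangian and defers the reduction to the analogous steps in the proof of Theorem \ref{theo-2}. Your symmetrization to ${{\mathbf{\Lambda}}_{\mathbf{H}}^{1/2}}\mathbf{S}{{\mathbf{\Lambda}}_{\mathbf{H}}^{1/2}}$ before invoking Hadamard is a cleaner justification of the diagonalization step than the paper's factored bound, but it is a refinement of the same lemma rather than a different approach.
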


\begin{proof}
	The closed-form solutions can be obtained as (\ref{eq49}) and (\ref{eq50}) from the extreme value conditions, and $\beta $ is 
	the Lagrange multiplier in the following Lagrange function
	\begin{align}
	L(\mathbf{S})\nonumber
	=\sum\nolimits_{i=1}^{{{N}_{t}}{{N}_{c}}}{{{\log }_{2}}({{\mu }_{ii}}{{s}_{ii}}/\sigma _{n}^{2}+1)+}\beta (\sum\nolimits_{i=1}^{{{N}_{t}}{{N}_{c}}}{{{s}_{ii}}}-E).
	\end{align}
\end{proof}

The diagonal elements of $\mathbf{S}$ are non-negative real numbers, so that ${{\mathbf{S}}^{1/2}}$ exists. Define $\mathbf{\Upsilon} \mathbf{=X}{{\mathbf{U}}_{\mathbf{H}}}$ satisfying $\text{tr}(\mathbf{S})=\text{tr}(\mathbf{U}_{\mathbf{H}}^{H}{{\Sigma }_{\mathbf{X}}}{{\mathbf{U}}_{\mathbf{H}}})=\text{tr}({{\mathbf{\Upsilon} }^{H}}\mathbf{\Upsilon} )$. For any $\mathbf{\Psi }(p)\in {{\mathbb{C}}^{{{N}_{x}}\times {{N}_{t}}}}$ with orthogonal columns satisfying $\mathbf{\Psi }=\text{diag}\{\mathbf{\Psi }(p)\}$, we have $\mathbf{\Upsilon} =\mathbf{\Psi }{{\mathbf{S}}^{1/2}}$. Through the above derivation, we have $\mathbf{X}=\mathbf{\Psi }{{\mathbf{S}}^{1/2}}\mathbf{U}_{\mathbf{H}}^{H}$. 
The optimal transmitted waveform obtained by maximizing communication MI is substituted into (\ref{eq19}), and the sensing MI is
\begin{equation}\label{eq53}
\begin{aligned}
&I(\mathbf{G};{{\mathbf{Y}}_{\text{rad}}}|\mathbf{X})\\
&={{N}_{r}}{{\log }_{2}}\left[ {{(\sigma _{n}^{2})}^{-{{N}_{t}}{{N}_{c}}}}\det ({{\Sigma }_{\mathbf{G}}}{{\mathbf{U}}_{\mathbf{H}}}\mathbf{SU}_{\mathbf{H}}^{H}+\sigma _{n}^{2}{{\mathbf{I}}_{{{N}_{t}}{{N}_{c}}}}) \right].  
\end{aligned}
\end{equation}

Since the matrix in (\ref{eq53}) cannot be transformed into a diagonal form, (\ref{eq53}) is obviously not the maximum sensing MI.

\begin{figure}[!t]
	\centering
	\includegraphics[width=0.5\textwidth]{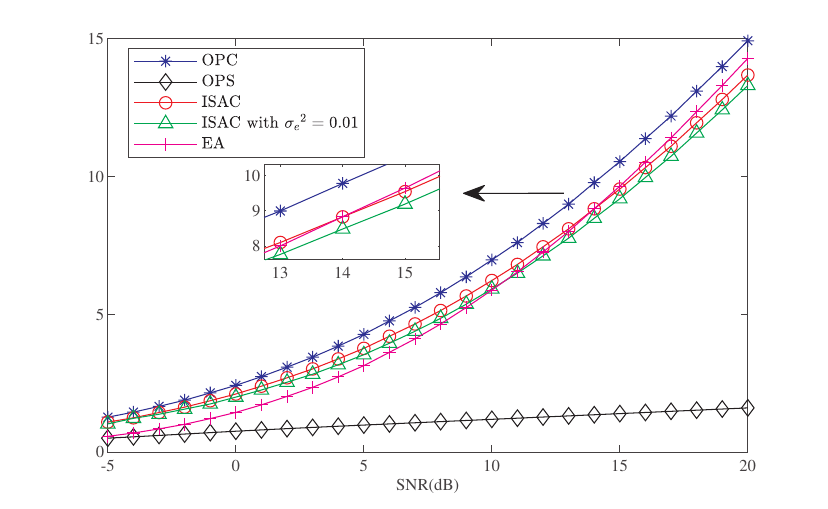}
	\caption{Spectral efficiency vs. SNR for different schemes.}
		\label{fig-2}
\end{figure}

\section{Simulation Results}

In this section, the above three waveform design methods are verified via extensive simulations. In the considered MIMO-OFDM system, we set $N_t = N_r =4$. The transmit antennas transmit OFDM signals with 32 IFFT points, and all subcarriers are used for communication and sensing, i.e., ${{N}_{c}}=32$. The sensing channel and the communication channel are assumed to be frequency-selective MIMO channels, and the channels remain unchanged in ${{N}_{x}}=10$ symbol durations. The sensing channel and the communication channel are spatially correlated, and the correlation channel is generated using the Kronecker model \cite{b14}. The maximum correlation coefficient of the sensing channel and the communication channel are 0.5. We assumed that there are four paths and four radar targets in total.

The communication performance is characterized by spectral efficiency (bit/s/Hz), which is the communication MI (bit) on the unit frequency of unit time. The sensing performance is measured by the sensing rate (bit/s/Hz), which is the sensing MI (bit) on the unit frequency of unit time.

\begin{figure}[!t]
	\centering
	\includegraphics[width=0.5\textwidth]{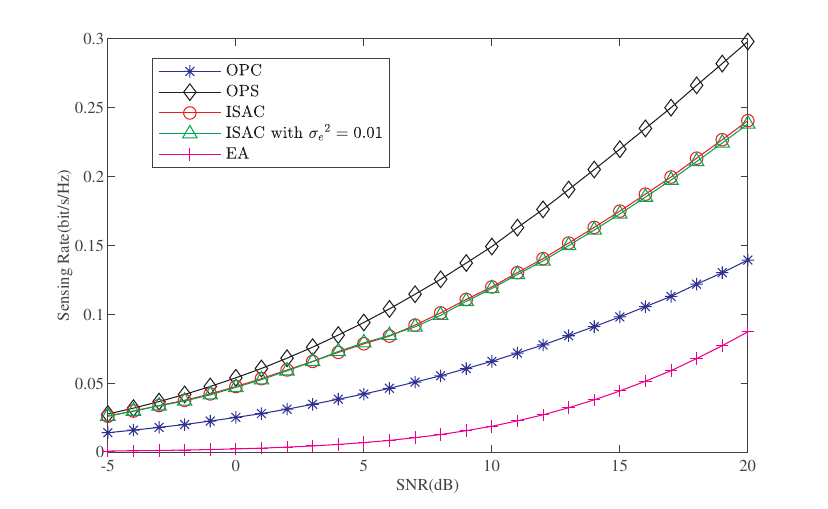}
	\caption{Sensing rate vs. SNR for different schemes.}
	\label{fig-3}
\end{figure}

\begin{table}[!t]
	\renewcommand\arraystretch{1.2}
	\caption{\label{sim_para}Simulation Parameters}
	
	\begin{center}
		\scalebox{0.9}{
		\begin{tabular}{l l l}
			\hline
			\hline
			
			{Symbol} & {Description} &{Value} \\
			
			\hline
			
			${N}_{c}$ & Number of subcarriers  & 32\\
			${N}_{t}$ & Number of transmit antenna elements & 4\\
			${{N}_{r}}$ & Number of receive antenna elements  & 4\\
			${{N}_{x}}$ & Number of OFDM symbols & 10\\
			${L}_{c}$ & Number of paths & 4\\
			${L}_{r}$ & Number of targets &4\\
			$B$ & Bandwidth &100 MHz\\
			$f_c$ & Carrier frequency &3 GHz\\
			${\sigma}_{n}^2$ &Noise variance &1\\
			${\sigma}_{h}^2$ &Mean path loss of communication channel  &(0.25,0.25,0.25,0.25)\\
			${\sigma}_{g}^2$ &Mean path loss of sensing channel &(0.25,0.25,0.25,0.25)\\
			
			\hline
			\hline
		\end{tabular}
	}
	\end{center}
\end{table}

In this section, the comparison schemes are the following four optimization schemes.
\begin{itemize}
	\item OPC: The optimization scheme that maximizes the communication MI based on the communication channel.
	
	\item OPS: The optimization scheme that maximizes sensing MI based on sensing channels.
	
	\item ISAC: The optimization scheme that maximizes the weighted sum of sensing MI and communication MI.
	
	\item EA: The equal allocation of power.

\end{itemize}

Through the analysis in Section \uppercase\expandafter{\romannumeral4}, the results obtained by the proposed scheme are the optimal solutions. As rare optimization algorithms are studied in ISAC MIMO-OFDM system, the comparison in this paper is mainly based on the EA (RA), OPC and OPS. The simulation results are obtained under the average of 4000 Monte Carlo simulations.

Fig. \ref{fig-2} 
plots the spectral efficiency of the six schemes versus the SNR, where the sensing weighting coefficient of the ISAC scheme is 0.5. It is revealed that the spectral efficiency of the six schemes gradually increases with the growth of SNR. In comparison to other schemes, the communication performance of the OPC scheme generally performs much better. In low SNR regimes, the communication performance of the ISAC scheme is slightly worse than the OPC scheme, while the OPS scheme is between ISAC and EA. In high SNR regimes, the EA scheme gradually approaches the OPC scheme and tends
to be consistent when the SNR is infinite. The reason is that when
the SNR is infinite, their MI expressions tend to be consistent. Besides, the effect of imperfect CSI is shown with the variance of channel estimation error to be 0.01, i.e., ${\sigma_e}^2=0.01$. The communication performance of ISAC scheme decreases due to the imperfect CSI. In general, as the SNR increases, the performance difference between OPC and OPS gradually increases.

Fig. \ref{fig-3} reveals the sensing rate of the six schemes versus the SNR, where the sensing weighting coefficient of the ISAC scheme is 0.5. In general, the sensing rate of the six schemes gradually increases with the increase of SNR. The OPS scheme has considerably superior sensing performance than other schemes, while the ISAC scheme has greater sensing performance than the OPC, EA, and RA schemes. As the SNR increases, the performance difference between different schemes gradually increases. Although the ISAC scheme reduces the sensing MI, it is still better than the OPC scheme. The sensing performance of the ISAC scheme also decreases slightly due to the imperfect CSI. As the ISAC scheme comprehensively considers the channel condition of sensing and communication, the optimal solution is affected which causes the sensing rate changing.  Through the analysis of Fig. \ref{fig-2} and Fig. \ref{fig-3}, it is revealed that the communication and sensing performance of the ISAC scheme are close to the communication performance of the OPC scheme and the sensing performance of the OPS scheme, respectively.

Furthermore, we evaluate the impact of the sensing weighting coefficient on the performance of sensing and communication, which is illustrated in Figs. \ref{fig-4} and \ref{fig-5}.

\begin{figure}[!t]
	\centering
	\includegraphics[width=0.5\textwidth]{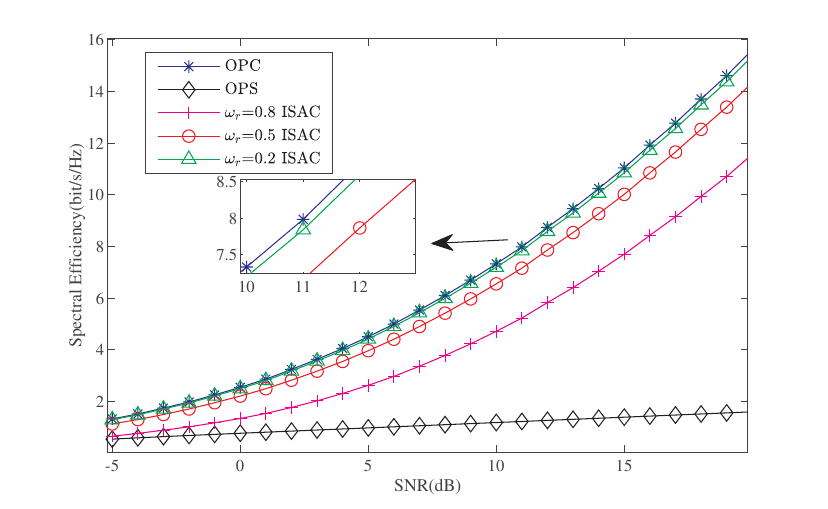}
	\caption{Spectral efficiency vs. SNR for different weighting coefficients of sensing.}
	\label{fig-4}
\end{figure}

\begin{figure}[!t]
	\centering
	\includegraphics[width=0.5\textwidth]{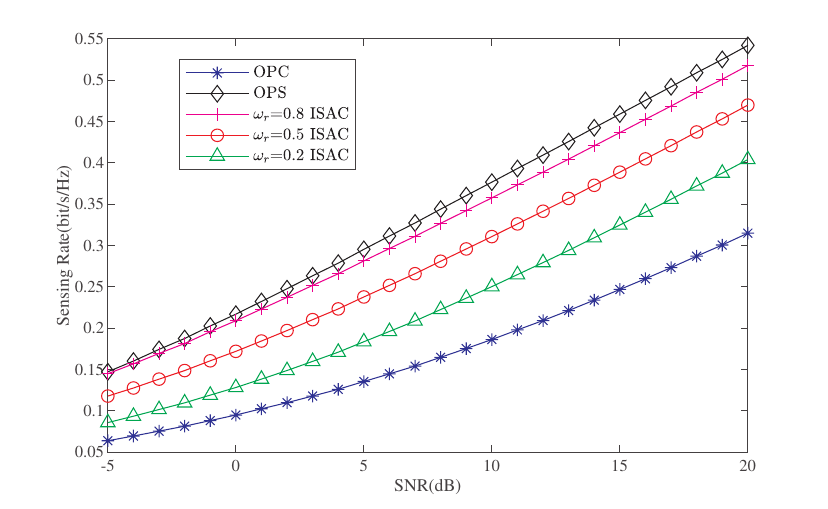}
	\caption{Sensing rate vs. SNR for different weighting coefficients of sensing.}
	\label{fig-5}
\end{figure}

Fig. \ref{fig-4} plots the spectral efficiency of the ISAC scheme with different weighting coefficients versus the SNR. When ${{\omega }_{r}}=0.2$, the OPC scheme and the ISAC scheme nearly have identical communication performance. When ${{\omega }_{r}}=0.8$, the communication performance of the ISAC scheme is almost the same as that of the OPS scheme, especially in the case of low SNR regimes. The performance difference between different schemes gradually increases with the growth of SNR. The performance when ${{\omega }_{r}}=0.5$ is in the middle. If only the communication performance is considered, the different weighting coefficients of ISAC have a great influence on the communication performance.

\begin{figure*}[!t]
	\centering
	\subfigure[\label{fig-6a}SNR=1 dB.]{\includegraphics[width=0.49\textwidth]{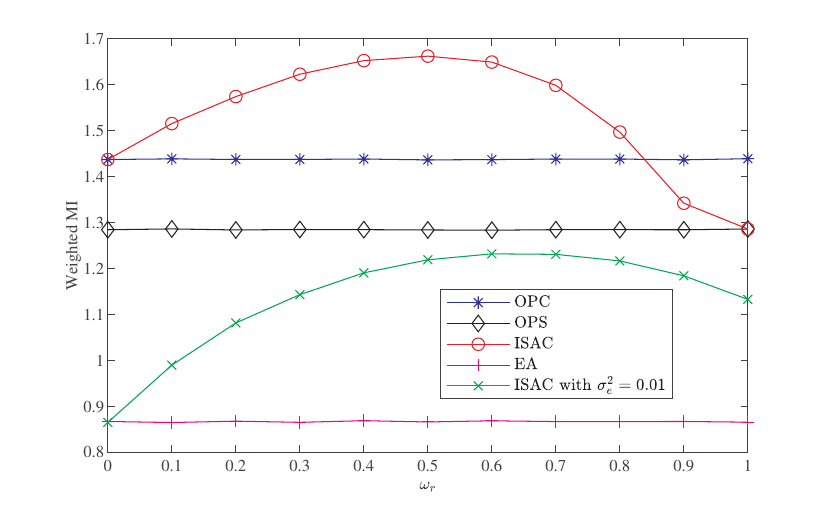}}
	\subfigure[\label{fig-6b}SNR=10 dB.]{\includegraphics[width=0.49\textwidth]{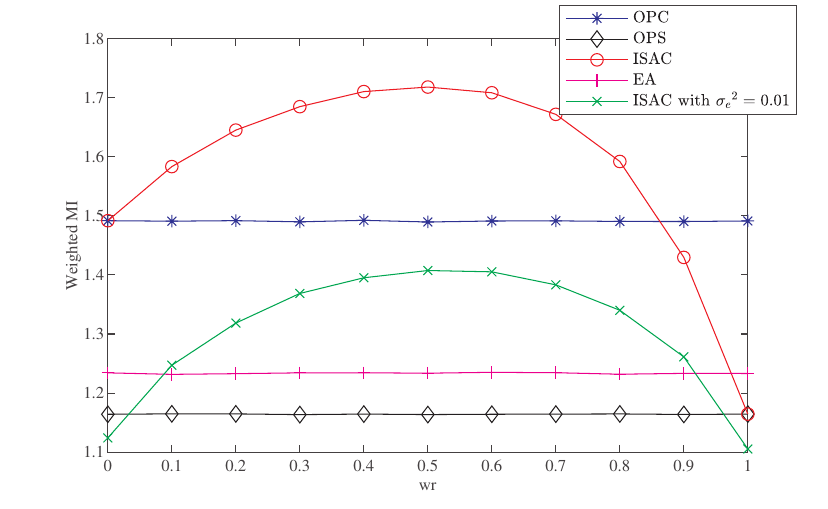}}
	\caption{ Weighted MI vs. weighting coefficient for different waveform optimization schemes }
\end{figure*}

\begin{figure}[]
	\centering
	\includegraphics[width=0.5\textwidth]{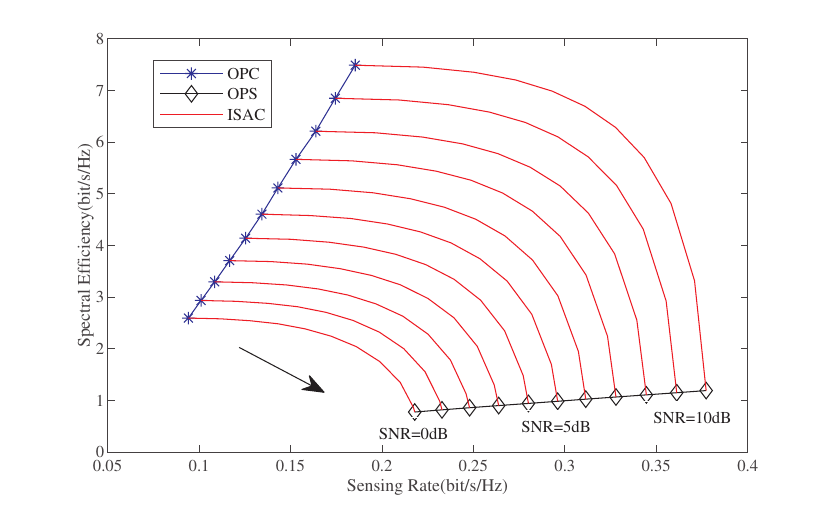}
	\caption{Trade-off curve.}
	\label{fig-7}
\end{figure}

Fig. \ref{fig-5} shows the sensing rate of ISAC with different weighting coefficients versus the SNR. When ${{\omega }_{r}}=0.8$, the sensing performance of ISAC is close to that of OPS with low SNR. As the SNR increases, the difference gets larger. When ${{\omega }_{r}}=0.2$, the sensing performance of ISAC is close to that of the OPC scheme, and the performance difference grows with the SNR. If only the sensing performance is considered, the different weighting coefficients of ISAC affect the sensing performance a lot. Figs. \ref{fig-4} and \ref{fig-5}  also show that if ${{\omega }_{r}}=0$ or ${{\omega }_{c}}=0$, the performance curve of ISAC scheme will coincide with those of OPC and OPS.

In the ISAC waveform design, the overall performance of the ISAC waveform is optimized. According to Figs. \ref{fig-2} and \ref{fig-3}, the sensing performance and the communication performance are not simultaneously optimal in the ISAC scheme. The weighted MI is used to represent the overall performance. According to (\ref{eq66}), the weighted MI of five schemes versus different weighting coefficients is simulated.

Fig. \ref{fig-6a} shows the weighted MI versus the weighting coefficient of sensing under different schemes with SNR=1 dB. According to Fig. \ref{fig-6a}, it can be found that with the weighting coefficient increasing, the weighted MI of the ISAC scheme increases at first and then decreases. When ${{\omega }_{r}}=0$ or ${{\omega }_{r}}=1$, the weighted MI of the ISAC scheme coincides with that of OPC or OPS schemes. As the weighting coefficient increases, the overall performance of the OPC, OPS, and EA schemes is almost unchanged. In general, the weighted MI of ISAC is better than that of OPC, OPS, and EA, indicating that the application of weighted MI is reasonable in the ISAC waveform design. Besides, the impact of the imperfect channel estimation is shown in Fig. \ref{fig-6a} with ${\sigma_e}^2=0.01$. It is found that the performance of the ISAC scheme decreases significantly due to the channel estimation error, since the optimization is based on the non-practical channel gains.

Fig. \ref{fig-6b} shows the same change curve as Fig. \ref{fig-6a} with SNR=10 dB. The communication performance and sensing performance of each scheme are different from those in the low SNR regimes. When the sensing weighting coefficient is close to 1, the weighted MI of the OPS
scheme decreases significantly in the high SNR regimes, and
so does the performance of the ISAC scheme, even far lower
than that of the OPC scheme. Thus, the ISAC scheme with imperfect CSI also decreases, lower than the OPS scheme when $\omega_r$ is small or large enough. The communication performance of OPS and OPC shows a huge difference in the high SNR regimes, much larger than the difference of the sensing performance between OPC and OPS. 

\begin{figure}[]
	\centering
	\includegraphics[width=0.5\textwidth]{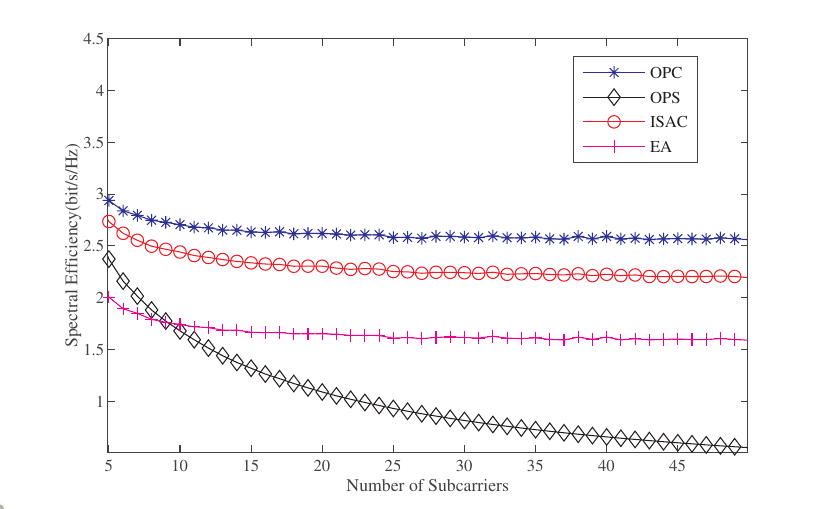}
	\caption{Spectral efficiency vs. the number of subcarriers for different waveform optimization schemes.}
	\label{fig-8}
\end{figure}

\begin{figure}[]
	\centering
	\includegraphics[width=0.5\textwidth]{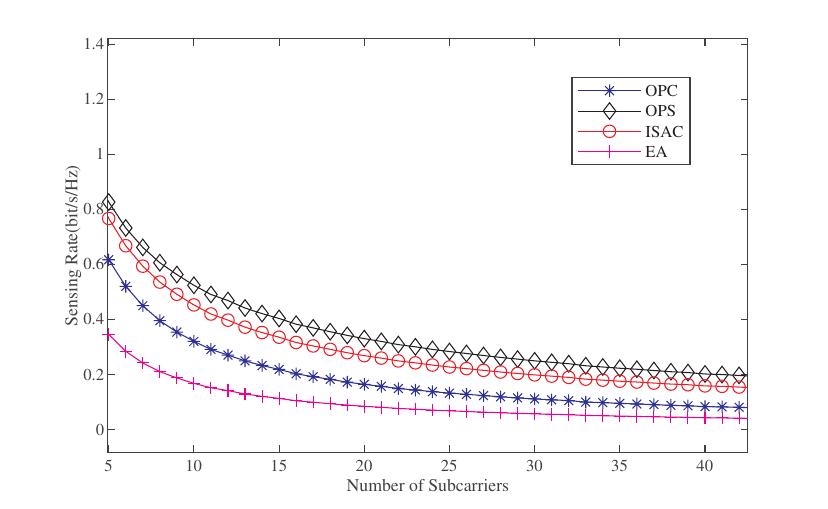}
	\caption{Sensing rate vs. the number of subcarriers for different waveform optimization schemes.}
	\label{fig-9}
\end{figure}

\begin{figure}[!t]
	\centering
	\includegraphics[width=0.5\textwidth]{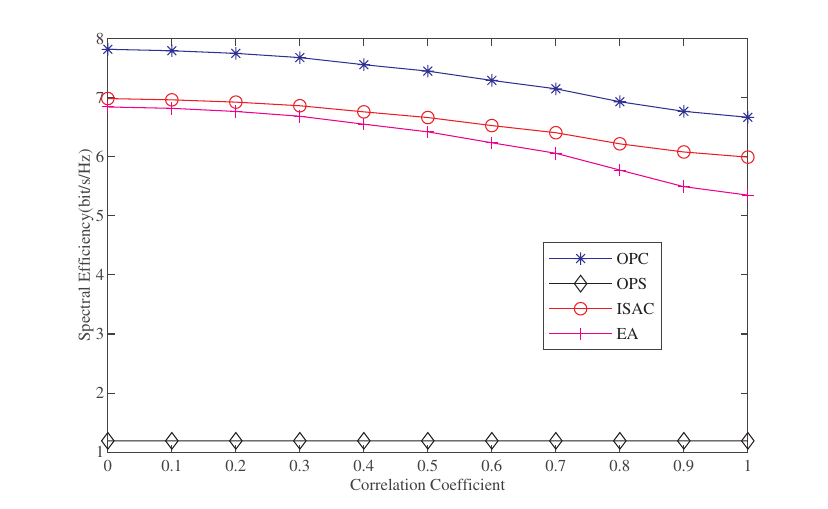}
	\caption{Spectral efficiency vs. correlation coefficient for different waveform optimization schemes.
		\label{fig-10}}
\end{figure}

\begin{figure}[!t]
	\centering
	\includegraphics[width=0.5\textwidth]{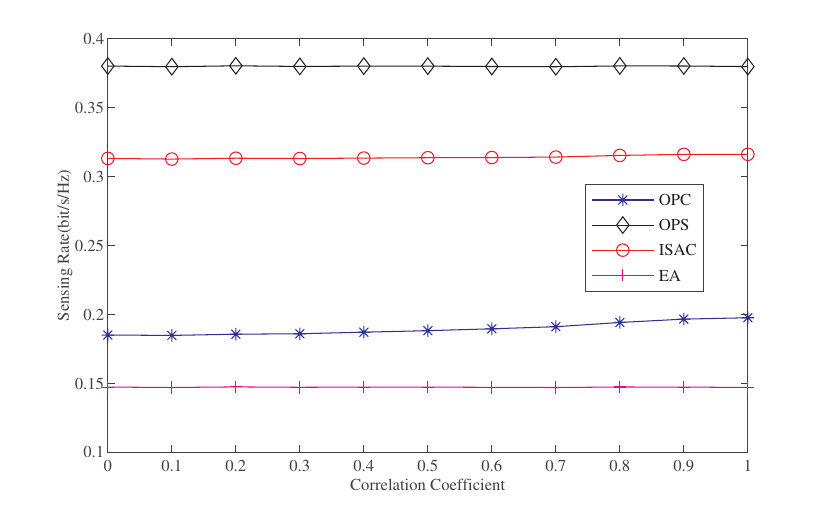}
	\caption{Sensing rate vs. correlation coefficient for different waveform optimization schemes.
		\label{fig-11}}
\end{figure}

\begin{figure}[t]
	\centering
	\includegraphics[width=0.5\textwidth]{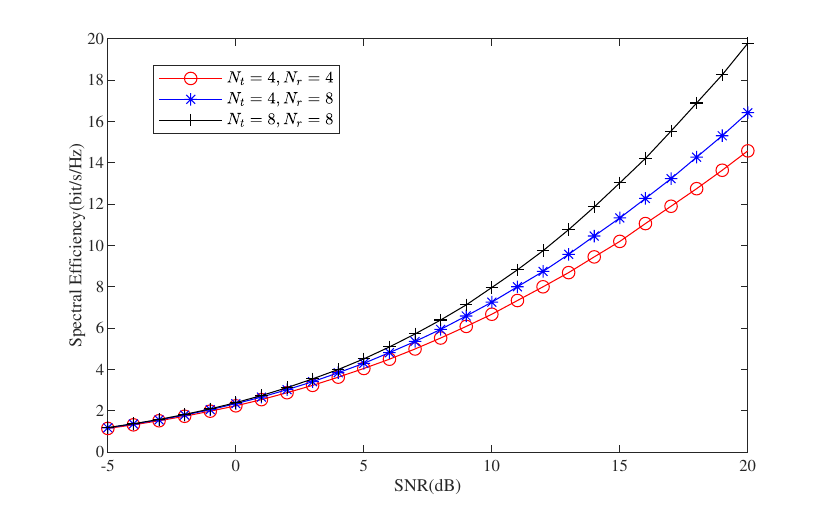}
	\caption{Spectral efficiency vs. SNR for ISAC scheme of different number of antennas.
		\label{fig-12}}
\end{figure}

\begin{figure}[t]
	\centering
	\includegraphics[width=0.5\textwidth]{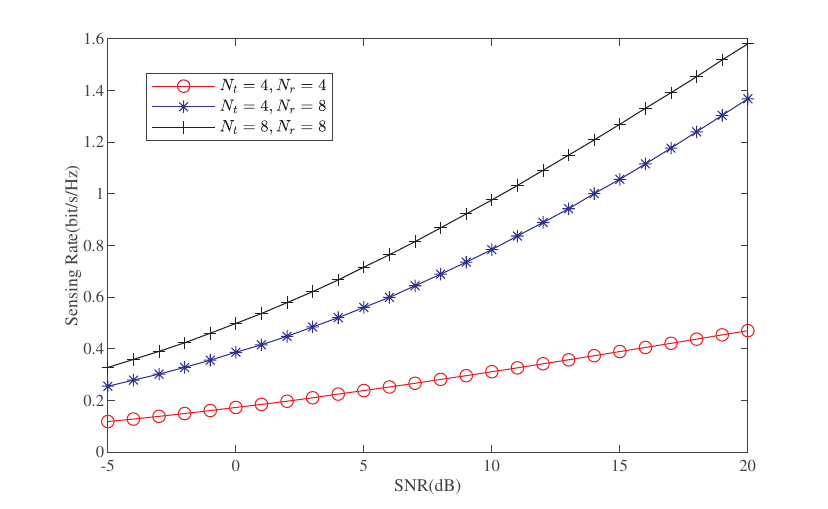}
	\caption{Sensing rate vs. SNR for ISAC scheme of different number of antennas.
		\label{fig-13}}
\end{figure}

Fig. \ref{fig-7} shows the trade-off between the sensing and communication for the OPC, OPS and ISAC schemes under
different SNRs. The ISAC scheme has a sensing weighting coefficient from 0 to 1 in the direction of the arrow. With the increase of SNR, the sensing performance and communication performance are improved, and the difference in the weighting coefficient causes the ISAC scheme to change between the OPC scheme and the OPS scheme. It is noted that there is no comparison among the weighted sum with different weighting factors. An appropriate weighting coefficient can be selected through Fig. \ref{fig-7} to meet the certain performance requirement of communication and sensing. Fig. \ref{fig-7} also reveals that it is impossible to attain optimal communication and sensing performance simultaneously, only possible to make a trade-off between communication and sensing.

With OFDM signal, the bandwidth $B$ (Hz) is divided into ${{N}_{c}}$ subcarriers, which satisfies $B={{N}_{c}}\Delta f$, where  $\Delta f$ is the subcarrier spacing. In Sec. \uppercase\expandafter{\romannumeral2}, we prove that the subcarrier spacing is related to the subcarrier correlation, so that the number of subcarriers is used to represent the subcarrier correlation coefficient.

Fig. \ref{fig-8} shows the spectral efficiency of different schemes versus the number of subcarriers, where SNR=1 dB and the number of subcarriers ranges from 5 to 50. It is revealed from Fig. 8 that with the increasing number of subcarriers, the spectral efficiency of OPC, ISAC, and EA schemes is basically unchanged, while the spectral efficiency of the OPS scheme is decreasing. This is due to the fact that the communication channel correlation of different subcarriers has been omitted in Sec. \uppercase\expandafter{\romannumeral2}.

Fig. \ref{fig-9} shows the sensing rate of different schemes versus the number of subcarriers, where the parameters are identical to those in Fig. 8. As the number of subcarriers grows, the sensing rates of OPS and ISAC schemes decrease, which is related to the correlation of subcarriers in the sensing channel. Meanwhile, the sensing rates of the OPC and EA schemes first decrease with an increase in the number of subcarriers, and then gradually remain unchanged when the number of subcarriers is large.

Finally, the spectral efficiency and sensing rate of different schemes versus different spatial correlation coefficients are simulated. The correlation coefficient increases with the decrease of the angle spread and the increase of antenna spacing~\cite{b15}. The simulation is conducted using the maximum correlation coefficient of the changing communication channel.

Fig. \ref{fig-10} illustrates the spectral efficiency of different schemes versus the spatial correlation coefficient. Except for the OPS scheme which remains basically unchanged with the increase of the correlation coefficient, the spectral efficiency of the other schemes decreases slightly with the increasing correlation coefficient. This is due to the fact that, in the high
SNR regimes, the OPC scheme and the EA scheme tend to be
consistent.

Fig. \ref{fig-11} shows the sensing rate of different schemes versus the spatial correlation coefficient. It can be found that different from Fig. \ref{fig-10}, with the increase of the spatial correlation coefficient, the sensing rates of all the schemes fluctuate slightly. This is due to the fact that only the maximum correlation coefficient of the communication channel changes in the simulation, merely changing the OPC scheme slightly. 
In addition, the maximum correlation coefficient of the communication channel has a limited effect on the sensing performance
of the ISAC scheme.

Fig. \ref{fig-12} shows the spectral efficiency of the ISAC scheme versus the SNR, where ${{\omega }_{r}}=0.5$. The number of antennas refers to the number of sub-channels (${{N}_{t}}\times {{N}_{r}}$). It is found that the communication performance improves with the increase of the number of antennas, and the performance difference enlarges with the increase of SNR.

Fig. \ref{fig-13} shows the sensing rate of the ISAC scheme versus the SNR, where ${{\omega }_{r}}=0.5$. It is found that the results are similar to those in Fig. \ref{fig-12}. The difference is that the sensing performance improves significantly with the increase of the number of receive antennas, as the ISAC scheme proposed in this paper is based on the spatial uncorrelation of each receive antenna.

\section{Conclusion}
In this paper, the sensing and communication MI for MIMO-OFDM ISAC system 
with subcarrier correlation and spatial correlation were studied.
More specifically, the ISAC waveform optimization schemes were investigated for 
maximizing sensing MI, communication MI, and their weighted sum, respectively. 
The derived MI optimization solutions are validated by the simulation results.
We disclose the trade-off in performance between communication and sensing, showing how balanced performance can be achieved with carefully designed MIMO-OFDM ISAC waveform. 
Our work can be further extended by considering more practical scenarios with imperfect CSI and the resource allocation between training and data payload signals.

\vspace{0.5 mm}
\begin{IEEEbiography}[{\includegraphics[width=1.1in,height=1.25in,clip,keepaspectratio]{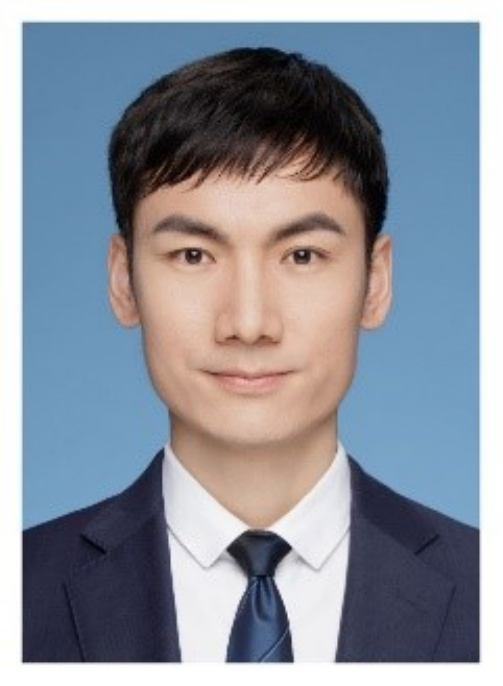}}]{Zhiqing Wei} (Member, IEEE) received the B.E. and Ph.D. degrees from the Beijing University of Posts and Telecommunications (BUPT), Beijing, China, in 2010 and 2015, respectively. He is an Associate Professor with BUPT. He has authored one book, three book chapters, and more than 50 papers. His research interest is the performance analysis and optimization of intelligent machine networks. He was granted the Exemplary Reviewer of IEEE WIRELESS COMMUNICATIONS LETTERS in 2017, the Best Paper Award of WCSP 2018. He was the Registration Co-Chair of IEEE/CIC ICCC 2018, the publication Co-Chair of IEEE/CIC ICCC 2019 and IEEE/CIC ICCC 2020.
\end{IEEEbiography}

\begin{IEEEbiography}[{\includegraphics[width=1.1in,height=1.25in,clip,keepaspectratio]{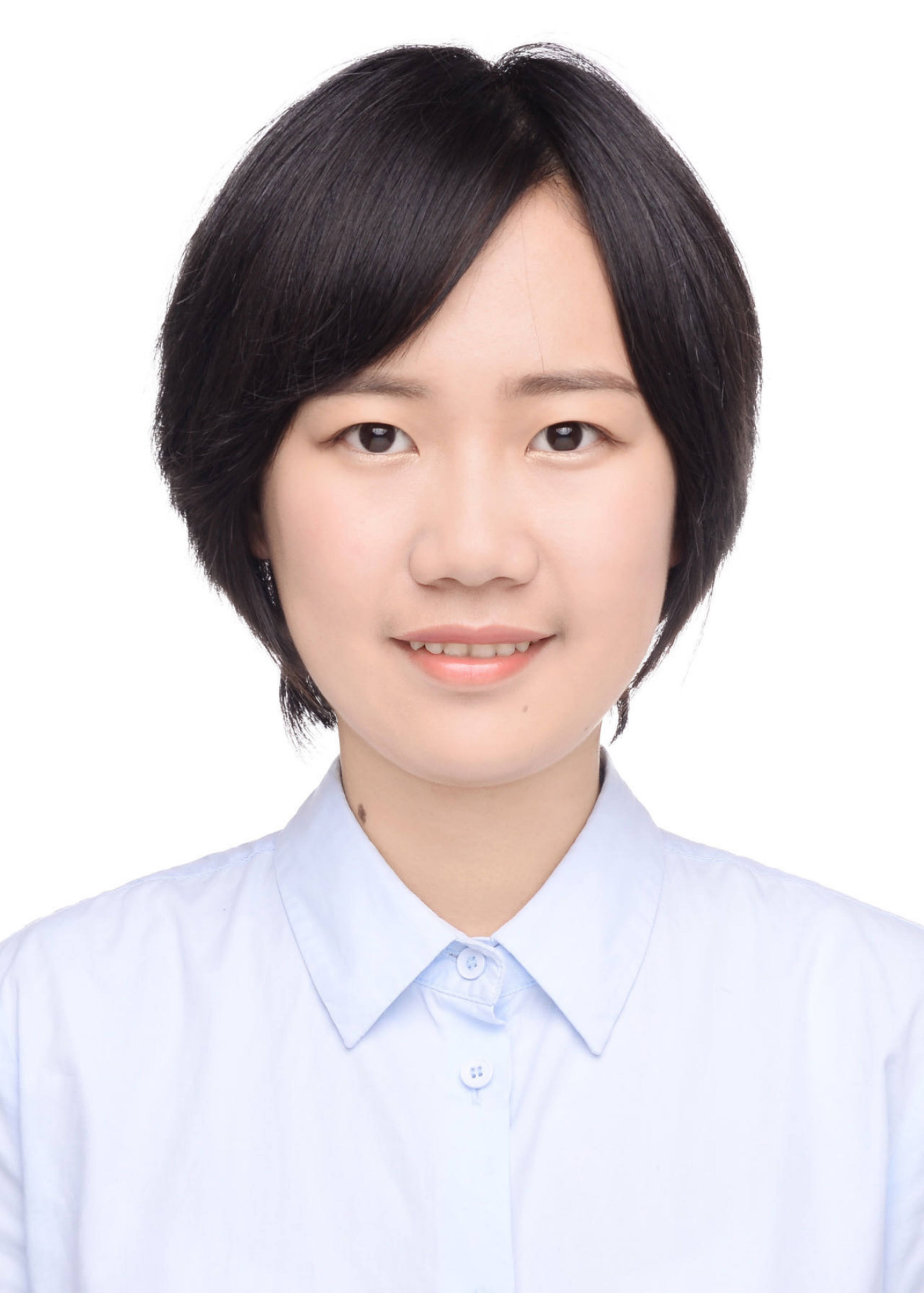}}]{Jinghui Piao} (Student Member, IEEE) received the B.E. degree from the Beijing Jiaotong University (BJTU), Beijing, China, in 2022. She is currently pursuing the master’s degree in Beijing University of Posts and Telecommunications (BUPT), Beijing, China. Her research interests are in integrated sensing and communication.
\end{IEEEbiography}

\begin{IEEEbiography}[{\includegraphics[width=1in,height=2in,clip,keepaspectratio]{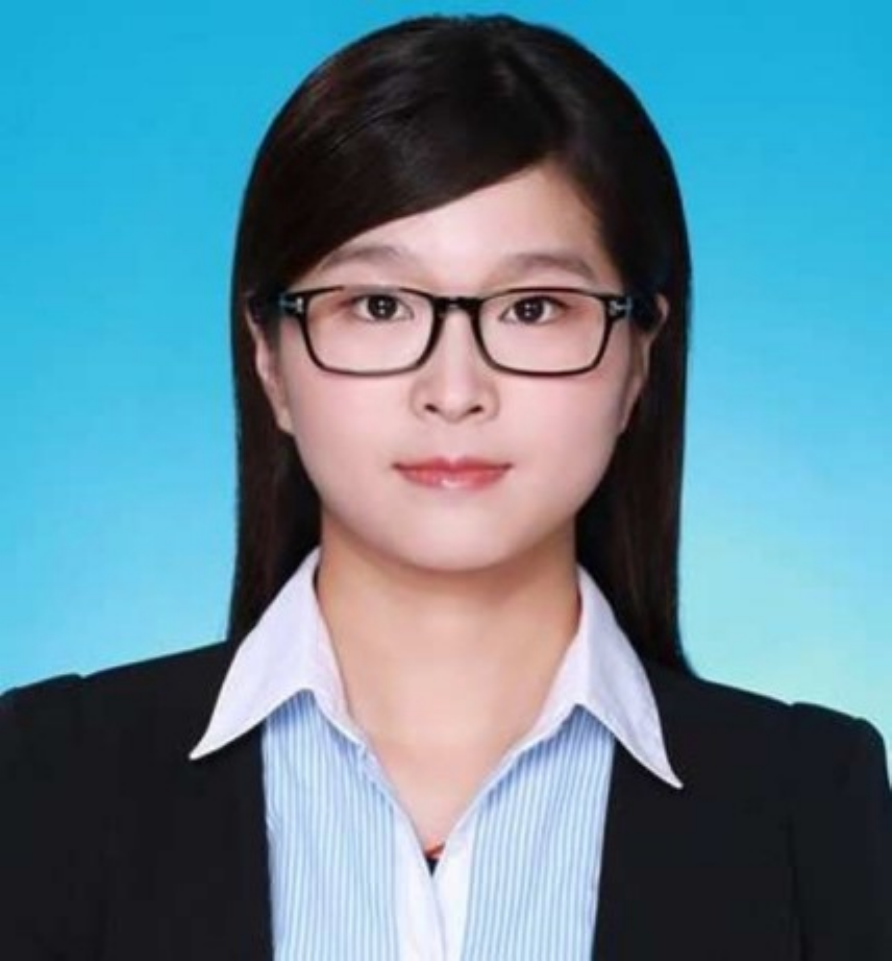}}]{Xin Yuan} (S'16-M'19) received Ph.D. degrees from Beijing University of Posts and Telecommunications (BUPT), Beijing, China, and the University of Technology Sydney (UTS), Sydney, Australia, in 2019 and 2020, respectively. Her research interests include machine learning and optimization, and their applications to UAV networks and intelligent systems.
\end{IEEEbiography}

\begin{IEEEbiography}[{\includegraphics[width=1.25in,height=1.4in,clip,keepaspectratio]{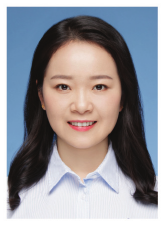}}]{Huici Wu} (Member, IEEE) received the Ph.D degree from Beijing University of Posts and Telecommunications (BUPT), Beijing, China, in 2018. From 2016 to 2017, she visited the Broadband Communications Research (BBCR) Group, University of Waterloo, Waterloo, ON, Canada. She is now an Associate Professor at BUPT. Her research interests are in the area of wireless communications and networks, with current emphasis on collaborative air-to-ground communication and wireless access security.
\end{IEEEbiography}

\begin{IEEEbiography}[{\includegraphics[width=1.1in,height=1.25in,clip,keepaspectratio]{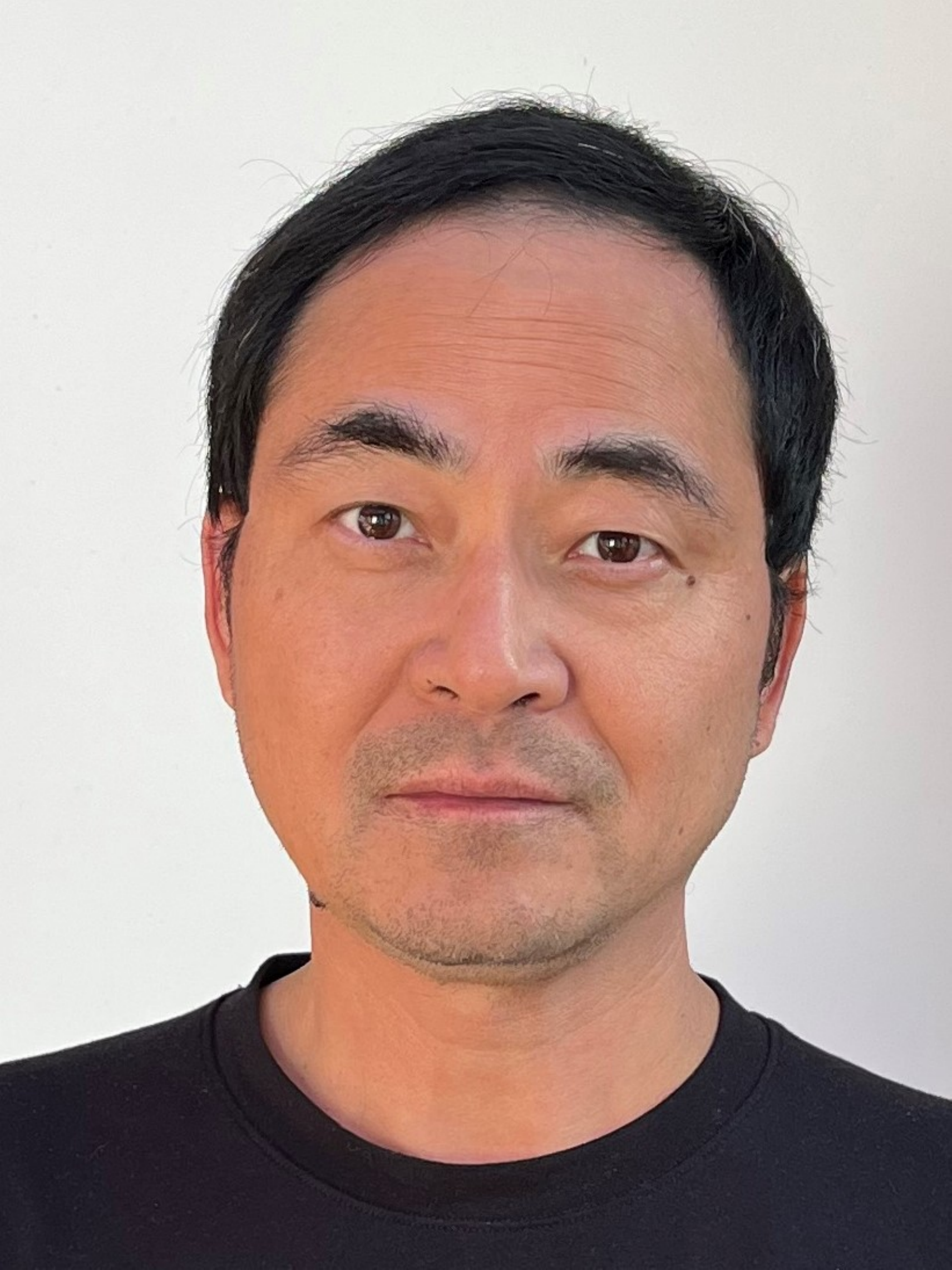}}]{J. Andrew Zhang} (M'04-SM'11) received the B.Sc. degree from Xi'an JiaoTong University, China, in 1996, the M.Sc. degree from Nanjing University of Posts and Telecommunications, China, in 1999, and the Ph.D. degree from the Australian National University, Australia, in 2004. Currently, he is a Professor in the School of Electrical and Data Engineering, University of Technology Sydney, Australia.
	
	Dr. Zhang's research interests are in the area of signal processing for wireless communications and sensing. He has published more than 270 papers in leading Journals and conference proceedings, and has won 5 best paper awards for his work including in IEEE ICC2013. He is a recipient of CSIRO Chairman's Medal and the Australian Engineering Innovation Award in 2012 for exceptional research achievements in multi-gigabit wireless communications.
\end{IEEEbiography}

\begin{IEEEbiography}[{\includegraphics[width=1.1in,height=1.25in,clip,keepaspectratio]{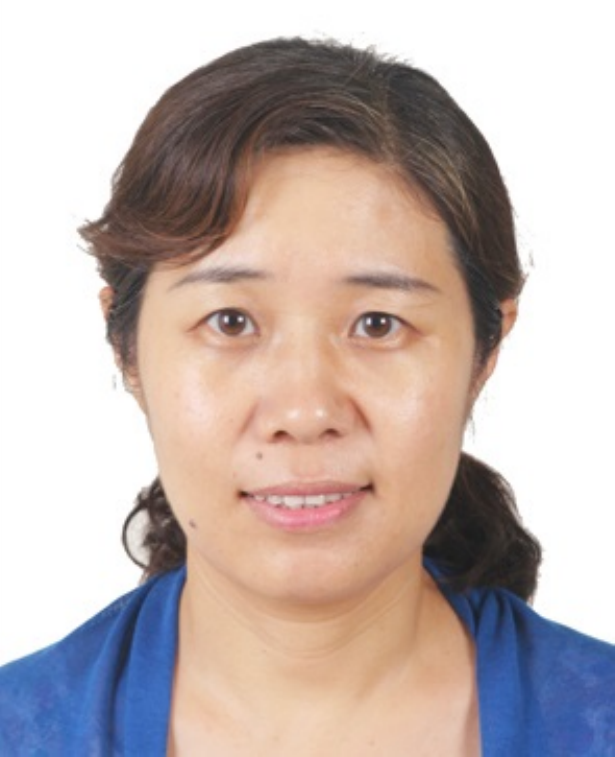}}]{Zhiyong Feng} (M'08-SM'15) received her B.E., M.E., and Ph.D. degrees from Beijing University of Posts and Telecommunications (BUPT), Beijing, China. She is a professor at BUPT, and the director of the Key Laboratory of the Universal Wireless Communications, Ministry of Education, P.R.China. She is a senior member of IEEE, vice chair of the Information and Communication Test Committee of the Chinese Institute of Communications (CIC). Currently, she is serving as Associate Editors-in-Chief for China Communications, and she is a technological advisor for international forum on NGMN. Her main research interests include wireless network architecture design and radio resource management in 5th generation mobile networks (5G), spectrum sensing and dynamic spectrum management in cognitive wireless networks, and universal signal detection and identification.
\end{IEEEbiography}

\begin{IEEEbiography}[{\includegraphics[width=1in,height=1.15in,clip,keepaspectratio]{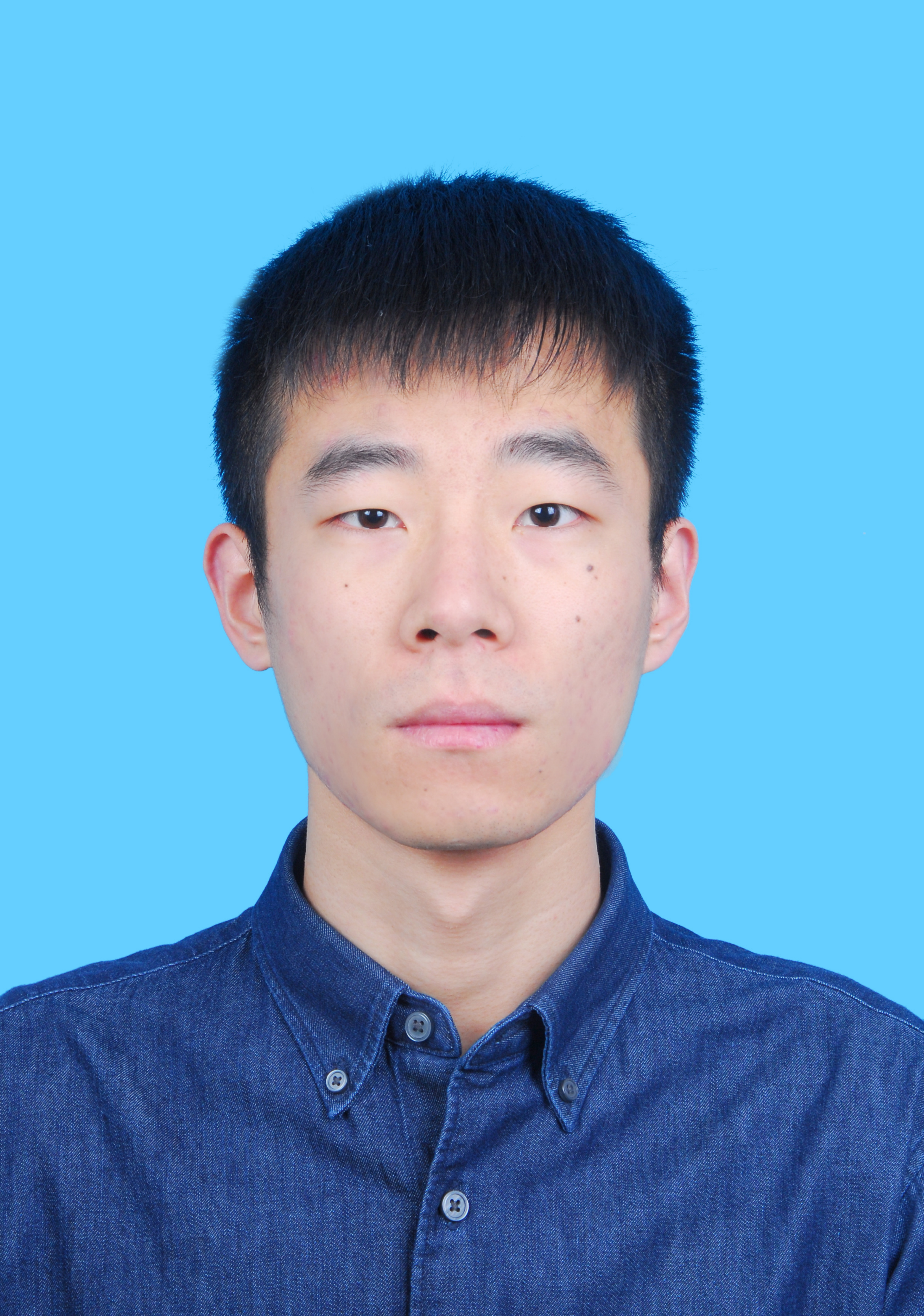}}]{Lin Wang} (Student Member, IEEE) received the B.E. degree from the Beijing University of Posts and Telecommunications (BUPT), Beijing, China, in 2021. He is currently pursuing the Ph.D. degree in BUPT. His research interests include stochastic geometry and integrated sensing and communication.
\end{IEEEbiography}

\begin{IEEEbiography}[{\includegraphics[width=1in,height=1.15in,clip,keepaspectratio]{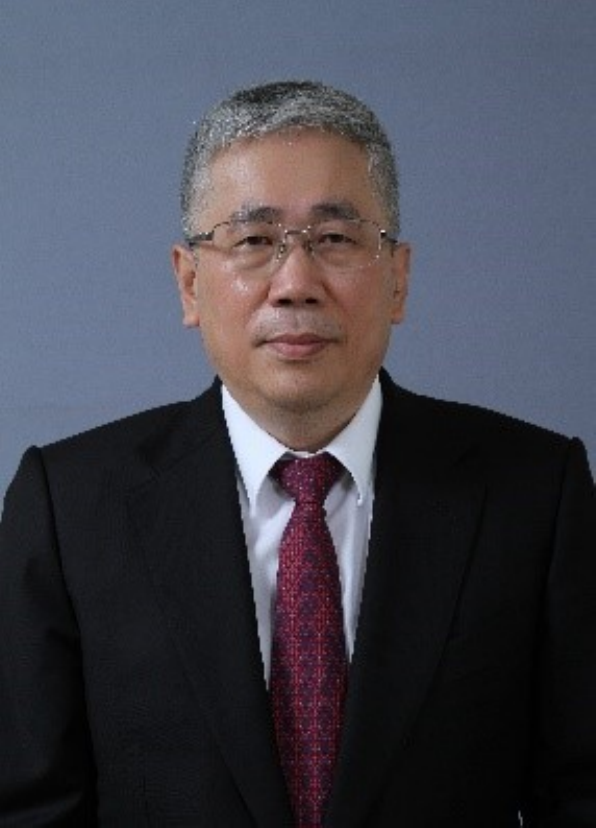}}]{Ping Zhang} (M'07-SM'15-F'18) received the Ph.D. degree from the Beijing University of Posts and Telecommunications (BUPT), Beijing, China, in 1990. He is currently a Professor of School of Information and Communication Engineering at BUPT , the Director of the State Key Laboratory of Networking and Switching Technology, the Director of the Department of Broadband Communication of Peng Cheng Laboratory, a member of IMT-2020 (5G) Experts Panel, and a member of Experts Panel for China’s 6G development. He served as a Chief Scientist of National Basic Research Program (973 Program), an expert in Information Technology Division of National High-Tech R\&D Program (863 Program), and a member of Consultant Committee on International Cooperation of National Natural Science Foundation of China. Prof. Zhang is an Academician of the Chinese Academy of Engineering. His research interests mainly focus on wireless communication.
\end{IEEEbiography}

\end{document}